\newtheorem{definition}{Definition}
\newtheorem{theorem}{Theorem}
\newtheorem{lemma}[theorem]{Lemma}
\newcommand{\er}{Erd\H{o}s-R\'{e}nyi\xspace}
\newcommand{\etal}{\emph{et al.}\xspace}
\newcommand{\eps}{\varepsilon}
\newcommand{\um}{{\sc User-Matching}\xspace}
\newcommand{\comment}[1]{}
\newcommand{\myparskip}{3pt}
\begin{document}


\title{An efficient  reconciliation algorithm for social networks}

\author{
  Nitish Korula \thanks{Google Inc., 76 Ninth Ave, 4th Floor, New York,
    NY 10011. \texttt{nitish@google.com}}
  \and
  Silvio Lattanzi \thanks{Google Inc., 76 Ninth Ave, 4th Floor, New York,
    NY 10011. \texttt{silviol@google.com}}
}

\date{}

\maketitle
\begin{abstract}
  People today typically use multiple online social networks
  (Facebook, Twitter, Google+, LinkedIn, etc.). Each online network
  represents a subset of their ``real'' ego-networks.  An interesting
  and challenging problem is to reconcile these online networks, that
  is, to identify all the accounts belonging to the same
  individual. Besides providing a richer understanding of social
  dynamics, the problem has a number of practical applications.
  At first sight, this problem appears algorithmically challenging.
  Fortunately, a small fraction of individuals explicitly link their
  accounts across multiple networks; our work leverages these
  connections to identify a very large fraction of the network.

  Our main contributions are to mathematically formalize the problem
  for the first time, and to design a simple, local, and efficient
  parallel algorithm to solve it. We are able to prove strong
  theoretical guarantees on the algorithm's performance on
  well-established network models (Random Graphs, Preferential
  Attachment). We also experimentally confirm the effectiveness of the
  algorithm on synthetic and real social network data sets.
\end{abstract}

\section{Introduction}

The advent of online social networks has generated a renaissance in
the study of social behaviors and in the understanding of the topology
of social interactions.  For the first time, it has become possible to
analyze networks and social phenomena on a world-wide scale and to
design large-scale experiments on them.  This new evolution in social
science has been the center of much attention, but has also attracted
a lot of critiques; in particular, a longstanding problem in the study
of online social networks is to understand the similarity between them
and ``real'' underlying social networks~\cite{G13}.

This question is particularly challenging because online social
networks are often just a realization of a subset of real social
networks. For example, Facebook ``friends'' are a good representation
of the personal acquaintances of a user, but probably a poor
representation of her working contacts, while LinkedIn is a good
representation of work contacts but not a very good representation of
personal relationships. Therefore, analyzing social behaviors in any
of these networks has the drawback that the results would only be
partial. Furthermore, even if certain behavior can be observed in
several networks, there are still serious problems because there is no
systematic way to combine the behavior of a specific user across
different social networks and because some social relationships will
not appear in any social network. For these reasons, identifying all the
accounts belonging to the same individual across different social
services is a fundamental step in the study of social science.

Interestingly, the problem has also very important practical
implications. First, having a deeper understanding of the
characteristics of a user across different networks helps to construct
a better portrait of her, which can be used to serve personalized
content or advertisements. In addition, having information about
connections of a user across multiple networks would make it easier to
construct tools such as ``friend suggestion'' or ``people you may want
to follow''.

The problem of identifying users across online social networks (also
referred to as the social network reconciliation problem) has been
studied extensively using machine learning techniques; several
heuristics have been proposed to tackle it. However, to the best of
our knowledge, it has not yet been studied formally and no rigorous
results have been proved for it.  One of the contributions of our work
is to give a formal definition of the problem, which is a precursor to
mathematical analysis. Such a definition requires two key components:
A model of the ``true'' underlying social network, and a model for how
each online social network is formed as a subset of this network. We
discuss details of our models in Section~\ref{sec:model}.

Another possible reason for the lack of mathematical analysis is that
natural definitions of the problem are demotivatingly similar to the
graph isomorphism problem.\footnote{In graph theory, an isomorphism
  between two graphs $G$ and $H$ is a bijection, $f(*)$, between the
  vertex sets of $G$ and $H$ such that any two vertices $u$ and $v$ of
  $G$ are adjacent in $G$ if and only if $f(u)$ and $f(v)$ are
  adjacent in $H$. The graph isomorphism problem is: Given two graphs
  $G$ and $G'$ find an isomorphism between them or determine that
  there is no isomorphism. The graph isomorphism problem is considered
  very hard, and no polynomial algorithms are known for it.} In
addition, at first sight the social network reconciliation problem
seems even harder because we are not looking just for isomorphism but
for similar structures, as distinct social networks are not identical.
Fortunately, when reconciling social networks, we have two advantages
over general graph isomorphism: First, real social networks are not
the adversarially designed graphs which are hard instances of graph
isomorphism, and second, a small fraction of social network users
explicitly link their accounts across multiple networks.

The main goal of this paper is to design an algorithm with
\emph{provable guarantees} that is simple, parallelizable and robust
to malicious users. For real applications, this last property is
fundamental, and often underestimated by machine learning
models.\footnote{Approaches based largely on features of a user (such
  as her profile) and her neighbors can easily be tricked by a
  malicious user, who can create a profile locally identical to the
  attacked user.} In fact, the threat of malicious users is so
prominent that large social networks (Twitter, Google+, Facebook) have
introduced the notion of `verification' for celebrities.

Our first contribution is to give a formal model for the graph
reconciliation problem that captures the hardness of the problem and
the notion of an initial set of trusted links identifying users across
different networks. Intuitively, our model postulates the existence of
a true underlying graph, then randomly generates 2 realizations of it
which are perturbations of the initial graph, and a set of trusted
links for some users. Given this model, our next significant
contribution is to design a simple, parallelizable algorithm (based on
similar intuition to the algorithm in~\cite{deanonymizing09}) and to
prove formally that our algorithm solves the graph reconciliation
problem if the underlying graph is generated by well established
network models. It is important to note that our algorithm relies on
graph structure and the initial set of links of users across different
networks in such a way that in order to circumvent it, an attacker
must be able to have a lot of friends in common with the user under
attack. Thus it is more resilient to attack than much of the previous
work on this topic.  Finally, we note that any mathematical model is,
by necessity, a simplification of reality, and hence it is important
to empirically validate the effectiveness of our approach when the
assumptions of our models are not satisfied. In
Section~\ref{sec:experiments}, we measure the performance of our
algorithm on several synthetic and ``real'' data sets.

We also remark that for various applications, it may be possible to
improve on the performance of our algorithm by adding heuristics based
on domain-specific knowledge. For example, we later discuss
identifying common Wikipedia articles across languages; in this
setting, machine translation of article titles can provide an
additional useful signal. However, an important message of this paper
is that a simple, efficient and scalable algorithm that does not take
any domain-specific information into account can achieve excellent
results for mathematically sound reasons.

\section{Related Work}

The problem of identifying Internet users was introduced to identify
users across different chat groups or web sessions in~\cite{nrt04,
  rr00}. Both papers are based on similar intuition, using writing
style (stylography features) and a few semantic features to identify
users.  The social network reconciliation problem was introduced more
recently by Zafarani and Liu in~\cite{zafarani}. The main intuition
behind their paper is that users tend to use similar usernames across
multiple social networks, and even when different, search engines find
the corresponding names. To improve on these first naive approaches,
several machine learning models were developed~\cite{ahhk10,lth11,
  mtmka12,ncm12, rc10}, all of which collect several features of the
users (name, location, image, connections topology), based on which
they try to identify users across networks. These techniques may be
very fragile with respect to malicious users, as it is not hard to
create a fake profile with similar characteristics. Furthermore, they
get lower precision experimentally than our algorithm
achieves. However, we note that these techniques can be combined with
ours, both to validate / increase the number of initial trusted links,
and to further improve the performance of our algorithm.

A different approach was studied in~\cite{mvgd11}, where the authors
infer missing attributes of a user in an online social network from
the attribute information provided by other users in the network.  To
achieve their results, they retrieve communities, identify the main
attribute of a community and then spread this attribute to all the
user in the community. Though it is interesting, this approach suffers
from the same limitations of the learning techniques discussed above.

Recently, Henderson et al.~\cite{hglaetf11} studied which are the most
important features to identify a node in a social network, focusing
only on graph structure information. They analyzed several features of
each ego-network, and also added the notion of recursive features on
nodes at distance larger than 1 from a specific node.  It is
interesting to notice that their recursive features are more resilient
to attack by malicious users, although they can be easily circumvented
by the attacker typically assumed in the social network security
literature~\cite{ykgf08}, who can create arbitrarily many nodes.

The problem of reconciling social networks is closely connected to the
problem of de-anonymizing social networks.  Backstrom \etal introduced
the problem of deanonymizing social networks in~\cite{bdk07}. In their
paper, they present 2 main techniques: An active attack (nodes are
added to the network before the network is anonymized), and a second
passive one. Our setting is similar to that described in their passive
attack. In this setting the authors are able to design a heuristic
with good experimental results; though their technique is very
interesting, it is somewhat elaborate and does not have a provable
guarantee.

In the context of de-anonymizing social networks, the work of
Narayanan and Shmatikov~\cite{deanonymizing09} is closely
related. Their algorithm is similar in spirit to ours; they look at
the number of common neighbors and other statistics, and then they
keep all the links above a specific threshold. There are two main
differences between our work and theirs. First, we formulate the
problem and the algorithm mathematically and we are able to prove
theoretical guarantees for our algorithm. Second, to improve the
precision of their algorithm in~\cite{deanonymizing09} the authors
construct a scoring function that is expansive to compute. In fact the
complexity of their algorithm is $O((E_1+E_2)\Delta_1\Delta_2)$, where
$E_1$ and $E_2$ are the number of edges in the two graphs and
$\Delta_1$ and $\Delta_2$ are the maximum degree in the 2 graphs. Thus
their algorithm would be too slow to run on Twitter and Facebook, for
example; Twitter has more than 200M users, several of whom have degree
more than 20M and Facebook more than 1B users with several users of
degree 5K. Instead, in our work we are able to show that a very simple
technique based on degree bucketing combined with the number of common
neighbors suffices to guarantee strong theoretical guarantees and good
experimental results. In this way we designed an algorithm with
sequential complexity $O((E_1+E_2)min(\Delta_1, \Delta_2)\log
(\max(\Delta_1, \Delta_2)))$ that can be run in $O(\log
(\max(\Delta_1, \Delta_2)))$ MapReduce rounds. In this context, our
paper can be seen as the first really scalable algorithm for network
de-anonymization with theoretical guarantees. Further, we also obtain
considerably higher precision experimentally, though a perfect
comparison across different datasets is not possible. The different
contexts also are important: In de-anonymization, the precision of
72\% they report corresponds to a significant violation of user
privacy. In contrast, we focus on the benefits to users of linking
accounts; in a user-facing application, suggesting an account with a
28\% chance of error is unlikely to be acceptable.

Finally, independently from our work, Yartseva and
Grossglauser~\cite{YG13} recently studied a very similar model
focusing only on networks generated by the \er random
graph model.

\section{Model and Algorithm}
\label{sec:model}

In this section, we first describe our formal model and its
parameters. We then describe our algorithm and discuss the intuition
behind it. 

\subsection{Model}
\label{subsec:model}
Recall that a formal definition of the user identification problem
requires first a model for the ``true'' underlying social network
$G(V, E)$ that captures relationships between people. However, we
cannot directly observe this network; instead, we consider two
imperfect realizations or copies $G_1(V, E_1)$ and $G_2(V, E_2)$ with
$E_1, E_2 \subseteq E$. Second, we need a model for how edges of $E$
are selected for the two copies $E_1$ and $E_2$. This model must
capture the fact that users do not necessarily replicate their entire
personal networks on any social networking service, but only a subset.

Any such mathematical models are necessarily imperfect descriptions of
reality, and as models become more `realistic', they become more
mathematically intractable. In this paper, we consider certain
well-studied models, and provide complete proofs. It is
possible to generalize our mathematical techniques to some
variants of these models; for instance, with small probability, the
two copies could have new ``noise'' edges not present in the original
network $G(V, E)$, or \emph{vertices} could be deleted in the
copies. We do not fully analyze these as the generalizations require
tedious calculations without adding new insights. Our experimental
results of Section~\ref{sec:experiments} show that the algorithm
performs well even in real networks where the formal mathematical
assumptions are not satisfied.

For the underlying social network, our main focus is on the
\emph{preferential attachment} model~\cite{ba99}, which is
historically the most cited model for social networks. Though the
model does not capture some features of real social networks, the key
properties we use for our analysis are those common to online social
networks such as a skewed degree distribution, and the fact that nodes
have distinct neighbors including some long-range / random connections
not shared with those immediately around them\cite{G83,K00}. In the
experimental section we will consider also different models and also real social
networks as our underline real networks.

For the two imperfect copies of the underlying network we assume that
$G_1$ (respectively $G_2$) is created by selecting each edge $e \in E$
of the original graph $G(V,E)$ independently with a fixed probability
$s_1$ (resp. $s_2$) (See Figure~\ref{fig:model}.) In the real world,
edges/relationships are not selected truly independently, but this
serves as a reasonable approximation for observed networks.  In fact,
a similar model has been previously considered by \cite{pg11}, which
also produced experimental evidence from an email network to support
the independent random selection of edges. Another plausible mechanism
for edge creation in social network is the \emph{cascade}
model, in which nodes are more likely to join a new network if more of
their friends have joined it. Experimentally, we show that our
algorithm performs even better in the cascade model than in the
independent edge deletion model.

These two models are theoretically interesting and practically
interesting~\cite{pg11}. Nevertheless, in some cases the analyzed
social networks may differ in their scopes and so the group of friends
that a user has in a social network can greatly differ from the group
of friends that same user has in the other network. To capture this
scenario in the experimental section, we also consider the Affiliation
Network model~\cite{ls09} (in which users participate in a number of
\emph{communities}) as the underlying social network. For each of
$G_1, G_2$, and for each community, we keep or delete all the edges
inside the community with constant probability. This highly correlated
edge deletion process captures the fact that a user's personal friends
might be connected to her on one network, while her work colleagues
are connected on the second network.  We defer the detailed
description of this experiment to Section~\ref{sec:experiments}.

Recall that the user identification problem, given \emph{only} the
graph information, is intractable in general graphs. Even the special
case where $s_1 = s_2 = 1$ (that is, no edges have been deleted) is
equivalent to the well-studied Graph Isomorphism problem, for which no
polynomial-time algorithm is known. Fortunately, in reality, there are
additional sources of information which allow one to identify a subset
of nodes across the two networks: For example, people can use the same
email address to sign up on multiple websites. Users often explicitly
connect their network accounts, for instance by posting a link to
their Facebook profile page on Google+ or Twitter and vice versa. To
model this, we assume that there is a set of users/nodes explicitly
linked across the two networks $G_1, G_2$. More formally, there is a
\emph{linking probability} $l$ (typically, $l$ is a small constant)
and each node in $V$ is linked across the networks independently with
probability $l$. (In real networks, nodes may be linked with differing
probabilities, but high-degree nodes / celebrities may be more likely
to connect their accounts and engage in cross-network promotions; this
would be more likely to help our algorithm, since low-degree nodes
are less valuable as seeds because they help identify only a small
number of neighbors. In the experiments of \cite{deanonymizing09}, the
authors explicitly consider high-degree nodes as seeds in the
real-world experiments.)

In Section~\ref{subsec:algo} below, we present a natural algorithm to
solve the user identification problem with a set of linked nodes, and
discuss some of its properties. Then, in Section~\ref{sec:theory}, we
prove that this algorithm performs well on several well-established
network models. In Section~\ref{sec:experiments}, we show that the
algorithm also works very well in practice, by examining its
performance on real-world networks.

\subsection{The Algorithm}
\label{subsec:algo}

To solve the user identification problem, we design a local
distributed algorithm that uses only structural information about the
graphs to expand the initial set of links into a mapping/identification
of a large fraction of the nodes in the two networks.

Before describing the algorithm, we introduce a useful definition.

\begin{definition}
A pair of nodes $(u_1, u_2)$ with $u_1 \in G_1, u_2 \in G_2$ is said
to be a \emph{similarity witness} for a pair $(v_1, v_2)$ with $v_1 \in
G_1, v_2 \in G_2$ if $u_1 \in N_1(v_1), u_2 \in N_2(v_2)$ and $u_1$
has been linked to / identified with $u_2$.
\end{definition}

Here, $N_1(v_1)$ denotes the neighborhood of $v_1$ in $G_1$, and
similarly $N_2(v_2)$ denotes the neighborhood of $v_2$ in $G_2$.

Roughly speaking, in each phase of the algorithm, every pair of nodes
(one from each network) computes an similarity score that is equal to
the number of similarity witnesses they have. We then create a link
between two nodes $v_1$ and $v_2$ if $v_2$ is the node in $G_2$ with
maximum similarity score to $v_1$ and vice versa. We then use the
newly generated set of links as input to the next phase of the
algorithm.

A possible risk of this algorithm is that in early phases, when few
nodes in the network have been linked, low-degree nodes could be
mis-matched. To avoid this (improving precision), in the $i$th phase,
we only allow nodes of degree roughly $D/2^i$ and above to be matched,
where $D$ is a parameter related to the largest node degree. Thus, in the
first phase, we match only the nodes of very high degree, and in
subsequent phases, we gradually decrease the degree threshold required
for matching. In the experimental section we will show in fact that this
simple step is very effective, reducing the error rate by more than $33\%$.
We summarize the algorithm, that we called User-Matching, as follows:

{\footnotesize
\begin{algo}
  \bf{Input:} \\$G_1(V,E_1), G_2(V,E_2), L$ a set of initial identification links\\
  across the networks, $D$ the maximum degree in the graph\\
  a minimum matching score $T$ and a specified number of\\
  iteration $k$.\\
  \bf{Output}: \\
  A larger set of identification links across the networks.\\
  \bf{Algorithm}: \\
  For $i=1,\dots, k$ \+ \\
  For $j=\log D,\dots, 1$\+ \\
  For all the pairs $(u,v)$ with $u\in G_1$ and $v\in G_2$\\
  and such that $d_{G_1}(u)\geq 2^j$ and $d_{G_2}(v)\geq 2^j$\+ \\
  Assign to $(u,v)$ a score equal to the number\\
  of similarity witnesses between $u$ and $v$\- \\
  If $(u,v)$ is the pair with highest score in which\\
  either $u$ or $v$ appear and the score is above $T$ \\
  add $(u,v)$ to $L$.\-\- \\
  Output $L$
\end{algo}
}

Where $d_{G_i}(u)$ is the degree of node $u$ in $G_i$.  Note that the
internal for loop can be implemented efficiently with $4$ consecutive
rounds of MapReduce, so the total running time would consist of
$O(k\log D)$ MapReductions. In the experiments, we note that even for
a small constant $k$ ($1$ or $2$), the algorithm returns very
interesting results. The optimal choice of threshold $T$ depends on
the desired precision/recall tradeoff; higher choices of T improve
precision, but in our experiments, we note that $T = 2$ or $3$ is
sufficient for very high precision.

\begin{figure}
  \begin{center}
    \includegraphics[scale=0.30]{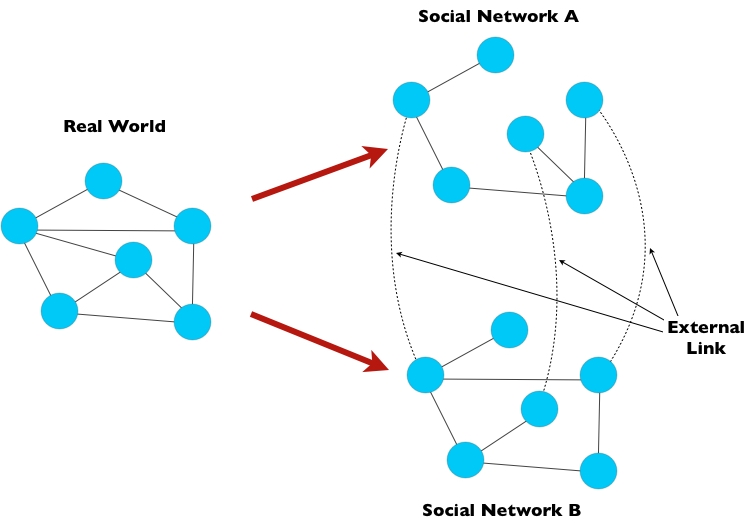}
  \end{center}
  \caption{From the real underlying social network, the model
    generates two random realizations of it, A and B, and some
    identification links for a subset of the users across the two
    realizations} \label{fig:model}
\end{figure}

\section{Theoretical Results}
\label{sec:theory}

In this section we formally analyze the performance of our algorithm
on two network models. In particular, we explain why our simple
algorithm should be effective on real social networks. The core idea
of the proofs and the algorithm is to leverage high degree nodes to
discover all the possible mapping between users. In fact, as we show
here theoretically and later experimentally, high degree nodes are
easy to detect. Once we are able to detect the high degree nodes, most
low degree nodes can be identified using this information.

We start with the \er (Random) Graph model~\cite{er59} to warm up with
the proofs, and to explore the intuition behind the algorithm. Then we
move to our main theoretical results, for the Preferential Attachment
Model.  For simplicity of exposition, we assume throughout this
section that $s_1 = s_2 = s$; this does not change the proofs in any
material detail.

\subsection{Warm up: Random Graphs}
\label{subsec:Gnp}

In this section, we prove that if the underlying `true' network is a
random graph generated from the \er model (also known as $G(n, p)$),
our algorithm identifies almost all nodes in the network with high
probability.

Formally, in the $G(n, p)$ model, we start with an empty graph on $n$
nodes, and insert each of the $\binom{n}{2}$ possible edges
independently with probability $p$. We assume that $p < 1/6$; in fact,
any constant probability results in graphs which are much denser than
any social network.\footnote{In fact, the proof works even with
  $p=1/2$, but it requires more care. However, when $p$ is too close
  to 1, $G$ is close to a clique and all nodes have near-identical
  neighborhoods, making it impossible to distinguish between them.}
Let $G$ be a graph generated by this process; given this underlying
network $G$, we now construct two partial realizations $G_1, G_2$ as
described by our model of Section~\ref{sec:model}.

We note that the probability a specific edge exists in $G_1$ or $G_2$
is $p s$. Also, if $n p s$ is less than $(1-\eps) \log n$ for $\eps >
0$, the graphs $G_1$ and $G_2$ are not connected
w.h.p. \cite{er59}. Therefore, we assume that $n p s > c\log n$ for some
constant $c$.

In the following we identify the nodes in $G_1$ with $u_1,\dots,u_n$
and the nodes in $G_2$ with $v_1,\dots,v_n$, where nodes $u_i$ and
$v_i$ correspond to the same node $i$ in $G$.  In the first phase, the
expected number of similarity witnesses for a pair $(u_i, v_i)$ is
$(n-1) p s^2 \cdot l$. This follows because the expected number of
neighbors of $i$ in $G$ is $(n-1)p$, the probability that the edge to
a given neighbor survives in both $G_1$ and $G_2$ is $s^2$, and the
probability that it is initially linked is $l$. On the other hand, the
expected number of similarity witnesses for a pair $(u_i, v_j)$, with
$i\neq j$ is $(n-2) p^2 s^2 \cdot l$; the additional factor of $p$ is
because a given other node must have an edge to both $i$ and $j$,
 which occurs with probability $p^2$. Thus, there is a factor of
$p$ difference between the expected number of similarity witnesses a
node $u_i$ has with its true match $v_i$ and with some other node
$v_j$, with $i\neq j$. The main intuition is that this factor of $p <
1$ is enough to ensure the correctness of algorithm. We prove this by
separately considering two cases: If $p$ is sufficiently large, the
expected number of similarity witnesses is large, and we can apply a
concentration bound. On the other hand, if $p$ is small, $n p^2 s^2$
is so small that the expected number of similarity witnesses is almost
negligible.

We start by proving that in the first case there is always a gap between
a real and a false match.

\begin{theorem}\label{thm:p-large}
  If $(n-2) p s^2 l \ge 24 \log n$ (that is, $p > \frac{24}{s^2l}
  \frac{\log n}{n-2}$), w.h.p. the number of first-phase similarity
  witnesses between $u_i$ and $v_i$ is at least $(n-1)ps^2l/2$. The number of first-phase
  similarity witnesses between $u_i$ and $v_j$, with $i\neq j$ is w.h.p. at most $(n-2)ps^2l/2$.
\end{theorem}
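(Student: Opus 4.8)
The plan is to prove each of the two claims via a Chernoff-type concentration bound applied to the number of similarity witnesses, exploiting the factor-of-$p$ gap between the expected counts. First I would set up the two relevant random variables. For the true match $(u_i, v_i)$, let $X_i$ denote the number of first-phase similarity witnesses; as computed in the text, $\mathbb{E}[X_i] = (n-1)ps^2 l$. For a false match $(u_i, v_j)$ with $i \neq j$, let $Y_{ij}$ denote the analogous count, with $\mathbb{E}[Y_{ij}] = (n-2)p^2 s^2 l$. Each of these is a sum of independent indicator random variables: for $X_i$, one indicator per potential common neighbor $w \neq i$, which contributes exactly when the edge $iw$ exists in $G$ \emph{and} survives into both $G_1$ and $G_2$ \emph{and} $w$ is initially linked; for $Y_{ij}$, the analogous event additionally requires $w$ to be adjacent to both $i$ and $j$ in $G$. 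Independence across the $w$'s holds because distinct edges of $G$ are chosen independently, edge survival is independent across edges, and linking is independent across nodes, so standard multiplicative Chernoff bounds apply directly.

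For the true match, the hypothesis $(n-2)ps^2 l \ge 24\log n$ guarantees $\mathbb{E}[X_i] = (n-1)ps^2 l \ge 24\log n$ is large, so I would apply the lower-tail Chernoff bound with deviation parameter $\delta = 1/2$ to get $\Pr[X_i < (1-1/2)\mathbb{E}[X_i]] = \Pr[X_i < (n-1)ps^2 l/2] \le \exp(-\mathbb{E}[X_i]/8) \le \exp(-3\log n) = n^{-3}$. A union bound over all $n$ nodes then shows that \emph{every} true match has at least $(n-1)ps^2 l/2$ witnesses with probability at least $1 - n^{-2}$, which is the first claim.

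For the false matches, the key observation is that the target threshold $(n-2)ps^2 l/2$ is a factor of roughly $1/(2p)$ times the mean $\mathbb{E}[Y_{ij}] = (n-2)p^2 s^2 l$. Since we assume $p < 1/6$, we have $1/(2p) > 3$, so the threshold exceeds the mean by a multiplicative factor bounded away from $1$; I would write the target as $(1+\delta)\mathbb{E}[Y_{ij}]$ with $1+\delta = 1/(2p) > 3$, i.e. $\delta > 2$. Applying the upper-tail Chernoff bound in the regime $\delta \ge 1$ (where the bound takes the form $\exp(-\delta \mathbb{E}[Y_{ij}]/3)$ or the cleaner $2^{-(1+\delta)\mathbb{E}[Y_{ij}]}$ form) yields $\Pr[Y_{ij} \ge (n-2)ps^2 l/2]$ bounded by something like $\exp(-c(n-2)ps^2 l)$ for an absolute constant $c$; since $(n-2)ps^2 l \ge 24\log n$, this is at most $n^{-c'}$ for a suitably large constant $c'$. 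The one subtlety is that there are $\binom{n}{2}$ false pairs, so the per-pair failure probability must beat $n^{-2}$ for the union bound to close; I would need to verify that the constants in the Chernoff exponent, combined with the $24\log n$ lower bound on the mean, give an exponent strictly larger than $2$ (tightening the leading constant $24$ if necessary).

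The main obstacle I anticipate is precisely this constant-chasing in the false-match case: ensuring the upper-tail bound is strong enough to survive a union bound over $\Theta(n^2)$ pairs while the mean $\mathbb{E}[Y_{ij}]$ could be as small as $\Theta(p\log n)$ when $p$ is near its lower threshold. The factor-of-$p$ gap is what makes this work --- because the threshold is a constant factor ($>3$) above the false-match mean rather than merely above it additively --- but I would need to be careful to use a Chernoff form valid for large multiplicative deviations (the $\delta > 2$ regime) rather than the small-$\delta$ quadratic form, and to confirm the resulting exponent clears the $2\log n$ union-bound budget. Everything else reduces to routine verification that the indicator sums are genuinely independent and that the stated hypotheses feed cleanly into the tail bounds.
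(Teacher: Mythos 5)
Your proposal is correct and follows essentially the same route as the paper: indicator sums for witness counts, a lower-tail Chernoff bound with $\delta = 1/2$ for the true match, an upper-tail Chernoff bound at threshold $(1+\delta)\mu$ with $1+\delta = 1/(2p)$ for the false match, and union bounds over $n$ nodes and $n(n-1)$ pairs respectively. The constant-chasing you flag does close: since $\delta\mu = (n-2)ps^2l(1/2 - p) \ge 8\log n$ for $p < 1/6$, any standard large-deviation form of the Chernoff bound gives a per-pair failure probability of at most $n^{-3}$, comfortably beating the $n^2$ union bound, exactly as in the paper.
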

\begin{proof}
  We prove both parts of the lemma using Chernoff Bounds (see, for
  instance, \cite{concentration}). 

  Let consider a pair for node $j$. Let $Y_i$ be a r.v. such that
  $Y_i = 1$ if node $u_i\in N_1(u_j)$ and $v_i\in N_2(v_j)$, and if
  $(u_i,v_i)\in L$, where $L$ is the initial seed of links across $G_1$ and
  $G_2$. Then, we have $Pr[Y_1 = 1]
  = p s^2 l$. If $Y = \sum_{i=1}^{n-1}Y_i$, the Chernoff bound implies
  that $Pr[Y < (1-\delta)E[Y]] \le e^{-E[Y]\delta^2/2}$. That is,
  \[Pr \left[Y < \frac{1}{2} (n-1)ps^2l \right] \le e^{-E[Y] / 8} < e^{-3\log n} = 1/n^3\]
  Now, taking the union bound over the $n$ nodes in $G$, w.h.p. every node
  has the desired number of first-phase similarity witnesses with its
  copy.

  To prove the second part, suppose w.l.o.g. that we are considering
  the number of first-phase similarity witnesses between $u_i$ and 
  $v_j$, with $i\neq j$. Let $Y_i = 1$ if node $u_z\in  N_1(u_i)$ and 
  $v_z\in N_2(v_j)$, and if $(u_z,v_x)\in L$. If $Y =
  \sum_{i=1}^{n-2}Y_i$, the Chernoff bound implies that $Pr[Y >
    (1+\delta) E[Y]] \le e^{-E[Y] \delta^2/4}$. That is,
  \[ Pr \left[Y > \frac{1}{2p} (n-2)p^2 s^2 l = \frac{(n-2)ps^2l}{2} \right]
  \le e^{-E[Y] ({1 \over 2p} - 1)^2 / 4}\]
  \[ = e^{-2p ({1 \over 2p} -1)^2 3\log n} \le 1/n^3 \]
  where the last inequality comes from the fact that $p < 1/6$. Taking
  the union bound over all $n(n-1)$ unordered pairs of nodes $u_i,
  v_j$ gives the fact that w.h.p., every pair of different nodes does
  not have too many similarity witnesses. 
\end{proof}

The theorem above implies that when $p$ is sufficiently large, there is a
gap between the number of similarity witnesses of pairs of nodes that correspond
to the same node and a pair of nodes that do not correspond to the same node. 
Thus the first-phase similarity witnesses are enough to completely distinguish
between the correct copy of a node and possible incorrect matches. 

It
remains only to consider the case when $p$ is smaller than the bound
required for Theorem~\ref{thm:p-large}. This requires the following
useful lemma.

\begin{lemma}\label{lem:at-most-c}
  Let $B$ be a Bernoulli random variable, which is 1 with probability
  at most $x$, and 0 otherwise. In $k$ independent trials, let $B_i$ 
  denote the outcome of the $i$\emph{th} trial,
  and let $B(k) = \sum_{i=1}^k B_i$:
  If $kx$
  is $o(1)$, the probability that $B(k)$ is greater than $2$ is at most
  $k^3x^3/6 + o(k^3x^3)$.
\end{lemma}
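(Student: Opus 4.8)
The plan is to prove the slightly cleaner statement $Pr[B(k) > 2] \le k^3 x^3 / 6$ directly via the first moment method, which immediately implies the claimed bound (one may simply take the $o(k^3 x^3)$ error term to be $0$). The key observation is that the event $\{B(k) \ge 3\}$ occurs if and only if at least one triple of trials all succeed, so I would count the number of successful triples and apply Markov's inequality. A pleasant feature of this route is that it never requires the trials to be identically distributed, so I can work directly with per-trial success probabilities bounded by $x$.

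Concretely, let $x_i \le x$ denote the success probability of the $i$th trial, so the $B_i$ are independent with $Pr[B_i = 1] = x_i$. I would use the pointwise inequality $\mathbf{1}[B(k) \ge 3] \le \binom{B(k)}{3}$, which holds because the right-hand side is $0$ whenever $B(k) \le 2$ and at least $1$ whenever $B(k) \ge 3$. Taking expectations, writing $\binom{B(k)}{3} = \sum_{\{i,j,l\}} B_i B_j B_l$ (a sum over unordered triples), and using independence gives
\[
Pr[B(k) > 2] \;\le\; E\!\left[\binom{B(k)}{3}\right] \;=\; \sum_{\{i,j,l\}} x_i x_j x_l \;\le\; \binom{k}{3} x^3 \;\le\; \frac{k^3 x^3}{6},
\]
where the last two steps bound each $x_i$ by $x$ and use $\binom{k}{3} \le k^3/6$.

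Notably, this argument needs neither the hypothesis $kx = o(1)$ nor any lower-order error term. If one instead insists on the direct route of expanding $Pr[B(k) \ge 3] = \sum_{j \ge 3} \binom{k}{j} x^j (1-x)^{k-j}$ (after reducing to the worst case $x_i = x$ by monotonicity of the event in each success probability), then the leading $j = 3$ term is $\binom{k}{3} x^3 (1-x)^{k-3} \le k^3 x^3/6$, and the only real work is controlling the tail: one bounds $\sum_{j \ge 4} \binom{k}{j} x^j$ by $\sum_{j \ge 4} (kx)^j / j! = O((kx)^4)$ and notes that $(kx)^4 = kx \cdot k^3 x^3 = o(k^3 x^3)$ precisely because $kx = o(1)$. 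This tail estimate is the one place where the stated hypothesis is used, and it is the only step requiring any care. Since the first-moment argument sidesteps it entirely while giving an even stronger conclusion, that is the route I would take.
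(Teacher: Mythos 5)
Your proof is correct, and it takes a genuinely different route from the paper's. The paper writes out $\Pr[B(k) \le 2] = (1-x)^k + kx(1-x)^{k-1} + \binom{k}{2}x^2(1-x)^{k-2}$ (implicitly reducing to the worst case where every trial succeeds with probability exactly $x$) and Taylor-expands to show this complement is $1 - k^3x^3/6 - o(k^3x^3)$; this is essentially the second route you sketch and then set aside. Your main argument instead bounds $\Pr[B(k) \ge 3]$ by the expected number of all-successful triples, $E\bigl[\binom{B(k)}{3}\bigr] = \sum_{\{i,j,l\}} x_i x_j x_l \le \binom{k}{3}x^3 \le k^3x^3/6$, via the pointwise inequality $\mathbf{1}[B(k)\ge 3] \le \binom{B(k)}{3}$. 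This buys you three things the paper's proof does not deliver as cleanly: it handles heterogeneous success probabilities $x_i \le x$ directly without a separate monotonicity step, it dispenses with the hypothesis $kx = o(1)$ entirely, and it yields the clean bound $k^3x^3/6$ with no error term. The paper's expansion, for its part, is sharper in principle (it is an exact identity before the Taylor step) and makes visible exactly where the $o(k^3x^3)$ slack arises, but for the way the lemma is used downstream (an upper bound of order $k^3x^3$ fed into a union bound) your first-moment argument is entirely sufficient and arguably preferable.
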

\begin{proof}
  The probability that $B(k)$ is at most 2 is given by:
  $(1-x)^k + kx \cdot (1-x)^{k-1} + \binom{k}{2} x^2 \cdot 
  (1-x)^{k-2}$. Using the Taylor series expansion for $(1-x)^{k-2}$,
  this is at most $1 - k^3x^3/6 - o(k^3x^3)$.
\end{proof}

When we run our algorithm on a graph drawn from $G(n, p)$, we set the
minimum matching threshold to be $3$.

\begin{lemma}
  If $p \le \frac{24}{s^2l} \frac{\log n}{n-2}$, w.h.p., algorithm \um
  never incorrectly matches nodes $u_i$ and $v_j$ with $i\neq j$.
\end{lemma}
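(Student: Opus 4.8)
The plan is to show that, in the regime $p \le \frac{24}{s^2 l}\frac{\log n}{n-2}$ with the matching threshold set to $T=3$, no false pair $(u_i,v_j)$ with $i\neq j$ can ever accumulate three similarity witnesses, and hence can never be matched. The crucial observation is that the witness count of a false pair is controlled by a purely \emph{structural} quantity that does not depend on the execution of \um: for indices $i\neq j$, let $W_{ij}$ be the number of indices $z$ with $u_z \in N_1(u_i)$ and $v_z \in N_2(v_j)$ (the ``common neighbors'' of the pair across the two copies). As long as every identification produced so far is correct, i.e.\ of the form $(u_z,v_z)$, any similarity witness for $(u_i,v_j)$ is a correctly matched node $z$ with $u_z\in N_1(u_i)$ and $v_z\in N_2(v_j)$, which is exactly such a common neighbor; so the witness count is at most $W_{ij}$, \emph{in every phase}. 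It therefore suffices to prove that w.h.p.\ $W_{ij}\le 2$ for all ordered pairs $i\neq j$ simultaneously.

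First I would bound $W_{ij}$ for a fixed pair. Since $G$ has no self-loops, neither $i$ nor $j$ can be a common neighbor, so $W_{ij}$ is a sum of $n-2$ indicators, one per candidate $z$, each equal to $1$ exactly when edge $(z,i)$ is present in $G$ and survives into $G_1$ and edge $(z,j)$ is present in $G$ and survives into $G_2$. These four events are independent, so each indicator is $1$ with probability $x=p^2s^2$, and indicators for distinct $z$ involve disjoint potential edges and are mutually independent. With $k=n-2$, the hypothesis on $p$ gives $(n-2)ps^2l\le 24\log n$, whence $kx=(n-2)p^2s^2 \le 24 p\log n/l = O(\log^2 n/n)=o(1)$. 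Thus Lemma~\ref{lem:at-most-c} applies and gives $\Pr[W_{ij}>2]\le (kx)^3/6 + o((kx)^3) = O(\log^6 n/n^3)$. A union bound over the fewer than $n^2$ ordered pairs then shows that w.h.p.\ $W_{ij}\le 2$ for all $i\neq j$; call this good event $\mathcal{E}$.

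It remains to close the loop across the $k$ phases by induction. The base case is immediate: before the first phase, $L$ consists only of the initial seed links, which identify each linked node with itself and are therefore correct. For the inductive step, suppose that at the start of some phase every pair in $L$ is correct. On $\mathcal{E}$, the reduction of the first paragraph shows every false pair $(u_i,v_j)$ has at most $W_{ij}\le 2 < T = 3$ similarity witnesses, so no false pair can be matched during the phase and $L$ stays entirely correct. Hence on $\mathcal{E}$ the algorithm never produces an incorrect match, and since $\mathcal{E}$ holds w.h.p.\ the claim follows.

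I expect the main obstacle to be the reduction in the first paragraph rather than the concentration estimate: one must notice that witnesses for a false pair collapse to the algorithm-independent count $W_{ij}$ only while all earlier matches are correct, which is precisely what forces the argument to be phrased as an induction over phases rather than a one-shot first-phase bound (the first phase alone would only certify correctness against the seed links, not against matches made later). The remaining care is routine: checking $kx=o(1)$ so that Lemma~\ref{lem:at-most-c} is legitimately applicable, and verifying that $(kx)^3=O(\log^6 n/n^3)$ comfortably beats the $n^2$ union bound.
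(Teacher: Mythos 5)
Your proposal is correct and follows essentially the same route as the paper: condition on all earlier matches being correct (the paper phrases this as "consider the first incorrect match," you as an induction over phases), upper-bound the witness count of a false pair by the algorithm-independent common-neighbor count with per-node probability $p^2s^2$, apply Lemma~\ref{lem:at-most-c} to get $O(\log^6 n/n^3)$, and union-bound over all pairs so that the threshold $T=3$ blocks every false match. Your write-up is in fact slightly more careful than the paper's in verifying the $kx=o(1)$ hypothesis of Lemma~\ref{lem:at-most-c} and in making the inductive bookkeeping explicit.
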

\begin{proof}
  Suppose for contradiction the algorithm does incorrectly match two
  such nodes, and consider the first time this occurs. We use
  Lemma~\ref{lem:at-most-c} above. Let $B_z$ denote the event that the
   vertex $z$ is a similarity witness for $u_i$ and
  $v_j$. 

  In order for $B_z$ to have occurred, we must have $u_z$ in
  $N_1(U_i)$ and $v_z$ in $N_2(v_j)$ and $(u_z,v_z)\in L$. 
  The probability that $B_z = 1$ is therefore at
  most $p^2 s^2$. Note that each $B_z$ is independent of the others,
  and that there are $n-2$ such events. As $p$ is $O(\log n/n)$, the
  conditions of Lemma~\ref{lem:at-most-c} apply, and hence the
  probability that more than 2 such events occur is at most $(n-2)^3
  p^6 s^6$.  But $p$ is $O(\log n/n)$, and hence this event occurs
  with probability at most $O(\log^6n / n^3)$. Taking the union bound
  over all $n(n-1)$ unordered pairs of nodes $u_i, v_j$ gives the fact
  that w.h.p., not more than 2 similarity witnesses can exist for any
  such pair. But since the minimum matching threshold for our algorithm is $3$,
  the algorithm does not incorrectly match this pair, contradicting
  our original assumption.
\end{proof}

Having proved that our algorithm does not make errors, we now show
that it identifies most of the graph.

\begin{theorem}
  Our algorithm identifies $1 - o(1)$ fraction
  of the nodes w.h.p.
\end{theorem}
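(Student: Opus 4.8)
The plan is to combine the two correctness results already in hand with a short counting argument that bounds how many \emph{true} pairs fail to reach the matching threshold. Recall that we have essentially ruled out false matches in both regimes: for large $p$, Theorem~\ref{thm:p-large} exhibits a strict gap between true and false first-phase scores, and for small $p$ the preceding lemma shows every false pair accumulates at most two similarity witnesses, below the threshold $T=3$. Consequently the only way a node $i$ can go unidentified is if its true pair $(u_i,v_i)$ never reaches $T$ similarity witnesses, and my goal is to bound the number of such nodes.

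First I would dispatch the large-$p$ case. By Theorem~\ref{thm:p-large}, w.h.p.\ every true pair $(u_i,v_i)$ has at least $(n-1)ps^2l/2 \ge 12\log n$ first-phase witnesses, while simultaneously every false pair $(u_i,v_j)$ has at most $(n-2)ps^2l/2$, which is strictly smaller. Hence for each $u_i$ the highest-scoring partner is $v_i$ (and symmetrically for each $v_i$), the score comfortably exceeds $T$, and the mutual-maximum rule matches all $n$ nodes already in the first phase; the identified fraction is in fact $1$.

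The substance is the small-$p$ case. Here the connectivity assumption $np s > c\log n$ still forces the expected first-phase witness count $\mu := (n-1)ps^2l \ge c\,s\,l\log n = \omega(1)$. For each $i$ let $W_i$ be the number of first-phase similarity witnesses of $(u_i,v_i)$; the Chernoff lower tail gives $\Pr[W_i < T] \le \Pr[W_i < \mu/2] \le e^{-\mu/8} \le n^{-csl/8}$ once $n$ is large enough that $\mu/2 \ge T = 3$. Whenever $W_i \ge T$, the no-error lemma guarantees that $(u_i,v_i)$ is the \emph{unique} above-threshold pair containing either $u_i$ or $v_i$, so the mutual-maximum rule matches it and node $i$ is identified.

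The main obstacle is that this per-node failure probability $n^{-csl/8}$ need not beat a union bound over all $n$ vertices when the constants $c,s,l$ are small, so I cannot claim that every node succeeds. Instead I would argue at the level of the fraction: letting $U = \#\{i : W_i < T\}$, linearity of expectation (which needs no independence) gives $\mathbb{E}[U] \le n\cdot n^{-csl/8} = n^{1-\alpha}$ with $\alpha = csl/8 > 0$, whence Markov's inequality yields $U = o(n)$ w.h.p. Taking a union bound over the two failure events (some incorrect match, or $U > \eps n$), w.h.p.\ the algorithm correctly identifies all but an $o(1)$ fraction of the nodes already in the first phase. Since each subsequent phase only enlarges $L$, the score of every true pair is non-decreasing and, by the no-error lemma, no errors are ever introduced; the identified set only grows, so the fraction remains $1-o(1)$, proving the theorem.
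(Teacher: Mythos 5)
Your proof is correct and follows essentially the same route as the paper's: each true pair has $\Omega(\log n)$ expected first-phase similarity witnesses, so a Chernoff bound makes each node identified with probability $1-o(1)$, and linearity of expectation gives the claimed fraction in expectation. The only real difference is the final concentration step---the paper invokes the method of bounded differences, whereas you apply Markov's inequality to the number of unidentified nodes (valid here because the per-node failure probability is polynomially small, so $\mathbb{E}[U]\le n^{1-\alpha}$)---and your explicit appeal to the no-error lemma to convert ``reaches the threshold'' into ``is matched'' spells out a detail the paper leaves implicit.
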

\begin{proof}
  Note that the probability that a node is identified is $1-o(1)$ by
  the Chernoff bound because in expectation it has $\Omega(\log n)$
  similarity witnesses. So in expectation, we identify $1 - o(1)$
  fraction of the nodes. Furthermore, by applying the method of
  bounded difference~\cite{concentration} (each node affects the final
  result at most by $1$), we get that the result holds also with high
  probability.
\end{proof}

\subsection{Preferential Attachment}
\label{subsec:pref}

The preferential attachment model was introduced by Barab\'{a}si and Albert 
in~\cite{ba99}. In this paper we consider the formal definition of the
model described in~\cite{br}.
\begin{definition}{\bf [PA model].}\label{def:PA}
Let $G_n^m$, $m$ being a fixed parameter, be defined inductively as follows:
\begin{itemize}
\item
  $G_1^m$ consists of a single vertex with $m$ self-loops.
\item
  $G_n^m$ is built from $G_{n-1}^m$ by adding a new node $u$ together
  with $m$ edges $e_u^1=(u,v_1), \ldots, e_u^m=(u,v_m)$ inserted one
  after the other in this order. Let $M_i$ be the sum of the degrees of all
  the nodes when the edge $e_u^i$ is added. The endpoint $v_i$ is
  selected with probability $\frac{deg(v_i)}{M_i+1}$, with the exception 
  of node $u$, which is selected with probability $\frac{d(u)+1}{M_i+1}$.
\end{itemize}
\end{definition}

The PA model is the most celebrated model for social networks. Unlike
the \er model, in which all nodes have roughly the same degree, PA
graphs have a degree distribution that more accurately resembles the
skew degree distribution seen in real social networks.  Though more
evolved models of social networks have been recently introduced, we
focus on the PA model here because it clearly illustrates why our
algorithm works in practice. Note that the power-law distribution of
the model complicates our proofs, as the overwhelming majority of
nodes only have constant degree ($\le 2m$), and so we can no longer
simply apply concentration bounds to obtain results that hold
w.h.p. For a (small) constant fraction of the nodes $u$, there does
not exist any node $z$ such that $u_z \in N_1(u_i)$ and $v_z \in
N_2(v_i)$; we cannot hope to identify these nodes, as they have no
neighbors ``in common'' on the two networks. In fact, if $m = 4$ and
$s=1/2$, roughly $30\%$ of nodes of ``true'' degree $m$ will be in
this situation. Therefore, to be able to identify a reasonable
fraction of the nodes, one needs $m$ to be at least a reasonably large
constant; this is not a serious constraint, as the median friend count
on Facebook, for instance, is over 100.  In our experimental section,
we show that our algorithm is effective even with smaller $m$.

We now outline our overall approach to identify nodes across two PA
graphs. In Lemma~\ref{lem:PA-large-degree}, we argue that for the
nodes of very high degree, their neighborhoods are different enough
that we can apply concentration arguments and uniquely identify
them. For nodes of intermediate degree ($\log^3 n$) and less, we argue
in Lemma~\ref{lem:PA-few-neighbors} that two distinct nodes of such
degree are very unlikely to have more than 8 neighbors in
common. Thus, running our algorithm with a minimum matching threshold
of 9 guarantees that there are no mistakes.  Finally, we prove in
Lemma~\ref{lem:PA-iterations} that when we run the algorithm
iteratively, the high-degree nodes help us identify many other nodes,
these nodes together with the high-degree nodes in turn help us
identify more, and so on: Eventually, the probability that any given
node is unidentified is less than a small constant, which implies that
we correctly identify a large fraction of the nodes.

Interestingly, we notice in our experiments that on real networks, the
algorithm has the same behavior as on PA graphs. In fact, as we will
discuss later, the algorithm is always able to identify
high-degree/important nodes and then, using this information, identify
the rest of the graph.

\noindent \textbf{Technical Results:}
The first of the three main lemmas we need,
Lemma~\ref{lem:PA-large-degree}, states that we can identify all
of the high-degree nodes correctly. To prove this, we need a few
technical results.
These results say that all nodes of high degree join the
network early, and continue to increase their degree significantly
throughout the process; this helps us show that high-degree nodes do
not share too many neighbors.

\subsubsection{High degree nodes are early-birds}\label{subsec:early-birds}

Here we will prove formally that the nodes of degree $\Omega (\log^2
n)$ join the network very early; this will be useful to show that two
high degree nodes do not share too many neighbors.

\begin{lemma}\label{early-birds}
Let $G_{n}^{m}$ be the preferential attachment graph obtained after
$n$ steps. Then for any node $v$ inserted after time $\psi n$, for any
constant $\psi>0$, $d_{n}(v)\in o(log^2 n)$ with high probability,
where $d_{n}(v)$ is the degree of nodes $v$ at time $n$.
\end{lemma}

\begin{proof}
It is possible to prove that such nodes have expected constant degree,
but unfortunately, it is not trivial to get a high probability result
from this observation because of the inherent dependencies that are
present in the preferential attachment model. For this reason we will
not prove the statement directly, but we will take a short detour
inspired by the proof in~\cite{thesis}. In particular we will first break
the interval in a constant number of small intervals. Then we will
show that in each interval the degree of $v$ will increase by at most
$O(\log n)$ with high probability. Thus we will be able to conclude
that at the end of the process the total degree of $v$ is at most
$O(\log n)$(recall that we only have a constant number of interval).

As mentioned above we analyze the evolution of the degree of $v$ in
the interval $\psi n$ to $n$ by splitting this interval in a constant
number of segments of length $\lambda n$, for some constant $\lambda >
0$ to be fixed later. Now we can focus on what happens to the degree
of $v$ in the interval $(t, \cdot \lambda n + t]$ if $d_{t}(v) \leq C
  \log n$, for some constant $C \geq 1$ and $t \geq \psi n$. Note that
  if we can prove that $d_{\lambda n + t} \leq C' \log n$, for some constant
  $C'\geq 0$ with probability $1 - o\left(n^{-2}\right)$, we can then
  finish the proof by the arguments presented in the previous
  paragraph.

In order to prove this, we will take a small detour to avoid the
dependencies in the preferential attachment model. More specifically,
we will first show that this is true with high probability for a
random variable $X$ for which it is easy to get the concentration
result. Then we will prove that the random variable $X$ stochastically
dominates the increase in the degree of $v$. Thus the result will
follow.

Now, let us define $X$ as the number of heads that we get when we toss
a coin which gives head with probability $\frac{C' \log n}{t}$ for
$\lambda n$ times, for some constant $C'\geq 13 C$. It is possible to
see that:
$$E[X] = \frac{C' \log n}{t}\lambda n \leq \frac{C' \log n}{\psi
  n}\lambda n \leq \frac{C' \lambda\log n}{\psi}$$
 Now we fix $\lambda = \frac{\psi}{100}$ and we use the Chernoff bound
 to get the following result:
\begin{eqnarray*}
\Pr\left(X > \frac{C'-C}{2} \log n\right) & = & \Pr\Bigg(X > \left(\frac{100(C'- C)}{2C'}\right)E[X]\Bigg)\\
& \leq & 2^{ -\frac{C'}{100}\log n\left(\frac{100(C'- C)}{2C'}\right)} \\
& \leq & 2^{ -\left(\frac{(6C')}{13}\log n\right)}\\
& \leq & 2^{ -6\log n} \in O\left(n^{-3}\right) 
\end{eqnarray*}

So we know that the value of $X$ is bounded by $\frac{C'-C}{2} \log n$
with probability $O\left(n^{-3}\right)$. Now, note that until the
degree of $v$ is less than $C' \log n$ the probability that $v$
increases its degree in the next step is stochastically dominated by
the probability that we get an head when we toss a coin which gives
head with probability $\frac{C' \log n}{t}$. To conclude our algorithm
we study the probability that $v$ become of degree $\frac{C' \log
  n}{t}$ precisely at time $t \leq t^* \leq \lambda n$. Note that until
time $t^*$ $v$ has degree smaller than $\frac{C' \log n}{t}$ and so it
is dominated by the coin. But we already know that when we toss such a
coin at most $\lambda n$ times the probability of getting
$\frac{C'-C}{2} \log n$ heads is in $O\left(n^{-3}\right)$.  Thus for
any $t \leq t^* \leq \lambda n$ the probability that $v$ reach degree
$C' \log n$ at time $t^*$ is $O(n^{-3})$. Thus by using the union bound
on all the possible $t^*$, $v$ will get to degree $C' \log n$ with
probability $O(n^{-2})$.

At this point we can finish the proof by taking the union bound on all
the segments(recall that they are constant) $(\psi n, \psi + \lambda
n], (\psi + \lambda n, \psi n + 2\lambda], \cdots$ and on the number
    of nodes and we get that all the nodes that join the network after
    time $\psi n$ have degree that is upper bounded by $C'' \log n$
    for some constant $C'' \geq 0$ with probability $O(n^{-1})$.
\end{proof}

\subsubsection{The rich get richer}

In this section we study another fundamental property of the
preferential attachment, which is that nodes with degree bigger than
$\log^2 n$ continue to increase their degree significantly throughout
the process. More formally:

\begin{lemma}\label{lem:rich-get-richer}
  Let $G_{n}^{m}$ be the preferential attachment graph obtained after
  $n$ steps. Then with high probability for any node $v$ of degree $d
  \geq \log^2 n$ and for any fixed constant $\epsilon \geq 0$, a
  $\frac{1}{3}$ fraction of the neighbors of $v$ joined the graph
  after time $\epsilon n$.
\end{lemma}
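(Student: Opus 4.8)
The plan is to reduce the lemma to a lower bound on how much a high-degree node grows after time $\epsilon n$. Writing $D_1 = d_{\epsilon n}(v)$ and $D_2 = d_n(v) - d_{\epsilon n}(v)$ for the number of neighbors $v$ acquires after time $\epsilon n$, the late fraction is $D_2/(D_1+D_2)$, which is at least $1/3$ exactly when $D_2 \ge D_1/2$, i.e. when $d_n(v) \ge \frac{3}{2}\,d_{\epsilon n}(v)$. The driving intuition is the ``rich get richer'' effect encoded in Definition~\ref{def:PA}: at step $t$ each of the $m$ edges of the arriving node attaches to $v$ with probability about $\frac{d_t(v)}{M_t+1} \approx \frac{d_t(v)}{2mt}$, since the total degree $M_t$ concentrates around $2mt$; so the expected number of neighbors $v$ gains at step $t$ is about $\frac{d_t(v)}{2t} \ge \frac{D_1}{2t}$, and summing over $t \in (\epsilon n, n]$ gives $E[D_2] \gtrsim \frac{D_1}{2}\ln(1/\epsilon)$. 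For $\epsilon$ a sufficiently small constant this already exceeds $D_1$, so in expectation far more than a $1/3$ fraction of neighbors arrive late. (The expected late fraction is essentially $1-\sqrt{\epsilon}$, which drops below $1/3$ once $\epsilon$ exceeds roughly $4/9$, so I read ``for any fixed constant $\epsilon$'' as ``for any constant below a fixed threshold''.)

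Turning this into a high-probability statement simultaneously for all high-degree nodes is the crux, and the obstacle is exactly the one met in Lemma~\ref{early-birds}: the per-step attachment probabilities depend on the evolving degree of $v$ and on $M_t$, so the increments are not independent and a naive Chernoff bound is invalid. I would route through stochastic domination, as there. Since $v$'s degree is non-decreasing, the conditional probability (given the entire history) that $v$ gains an edge at step $t$ is always at least $\frac{D_1}{3t}$, because each of the $m$ arriving edges attaches to $v$ with probability at least $\frac{D_1}{3mt}$, using $M_t+1 \le 3mt$ for large $t$. A process whose conditional success probability at each trial is bounded below by a fixed $p_t$ stochastically dominates the sum of independent $\mathrm{Bernoulli}(p_t)$ variables, to which a Chernoff bound applies.

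I would then split into two regimes by $D_1 = d_{\epsilon n}(v)$. In the first regime, $D_1 \ge C_0 \log n$: the dominating independent sum has mean $\Omega(\log n)$, so Chernoff gives $D_2 \ge \tfrac12 E[D_2] \ge D_1$ with probability $1 - n^{-c}$ for $c>1$, and a union bound over the $n$ nodes shows that every such node has $d_n(v) \ge 2D_1 \ge \frac{3}{2}D_1$, hence at least a $1/3$ fraction of late neighbors. In the second regime, $D_1 < C_0 \log n$, I would show that no such node can reach final degree $\log^2 n$ at all, so it imposes no constraint. This is precisely the upper-bound argument of Lemma~\ref{early-birds} restarted at time $\epsilon n$ with initial degree below $C_0\log n$: as long as $d_t(v) < C'\log n$ the degree increment is dominated by a coin of heads-probability $\frac{C'\log n}{t}$, and over $(\epsilon n, n]$ this coin yields only $O(\log n)$ heads with probability $1 - O(n^{-2})$; taking $C'$ large enough relative to $C_0$ ensures $v$ never reaches $C'\log n$, so $d_n(v) = O(\log n) \ll \log^2 n$ w.h.p., and a union bound handles all nodes.

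Combining the two regimes, with high probability every node of final degree at least $\log^2 n$ lies in the first regime and satisfies $d_n(v) \ge \frac{3}{2}\,d_{\epsilon n}(v)$, giving the claimed $1/3$ fraction. The main obstacle throughout is maintaining valid domination: getting the inequality direction right (a lower bound on growth in the first regime versus an upper bound in the second), and verifying that the stated conditional bound on the attachment probability genuinely holds at every single step despite the self-loop exception and the ``$+1$'' in the denominator of Definition~\ref{def:PA}. These perturb $M_t$ only by lower-order terms and are absorbed into the constants.
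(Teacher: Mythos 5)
Your proposal is correct and rests on the same engine as the paper's proof: stochastically dominate the degree increments of $v$ after time $\epsilon n$ from below by a sum of independent Bernoulli trials, apply a Chernoff bound, and union-bound over nodes (the paper, like you, also needs Lemma~\ref{early-birds} to place $v$ in the graph before time $\epsilon n$; you use this implicitly when you write $d_{\epsilon n}(v)$ and when you equate ``neighbors acquired after $\epsilon n$'' with ``neighbors that joined after $\epsilon n$''). The differences are in the case split. Where you handle small $D_1 = d_{\epsilon n}(v)$ by re-running the upper-bound argument of Lemma~\ref{early-birds} from time $\epsilon n$ onward to show such a node never reaches final degree $\log^2 n$, the paper disposes of the case $d_{\epsilon n}(v) \le \frac12\log^2 n$ in one line: since the final degree is at least $\log^2 n$, at least half of it must have accrued after time $\epsilon n$. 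That is strictly simpler and you may as well adopt it. In the complementary case, however, your version is tighter than the paper's: the paper dominates the increments by a coin of fixed bias $\log^2 n/(2mn)$ and concludes only that $v$ gains about $\frac{1-2\epsilon}{2}\log^2 n$ late neighbors, which is a $\frac13$ fraction only when $d_n(v) = \Theta(\log^2 n)$; your coin of bias proportional to $d_{\epsilon n}(v)/t$ is what is needed for the conclusion to scale correctly for nodes of much larger degree. Finally, your observation that the statement cannot hold for $\epsilon$ close to $1$ and must be read as ``for all sufficiently small constant $\epsilon$'' is well taken; the paper's own bound becomes vacuous for $\epsilon \ge \frac12$, and the lemma is only ever invoked with small $\epsilon$.
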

\begin{proof}
  By Lemma~\ref{early-birds} above, we know that $v$ joined the
  network before time $\epsilon n$ for any fixed constant $\epsilon
  \geq 0$. Now we consider two cases. In the first, $d_{\epsilon n}(v)
  \leq \frac12 \log^2 n$, in which case the statement is true because
  the final degree is bigger than $\log^2 n$.  Otherwise, we have that
  $d_{\epsilon n}(v) > \frac12 \log^2 n$, in this case the probability
  that $v$ increase its degree at every time step after $\epsilon n$
  dominates the probability that a toss of a biased coin which gives
  head with probability $\frac{\log^2 n}{2mn}$ comes up head.  Now
  consider the random variable $X$ that counts the number of heads
  when we toss a coin that lands head with probability $\frac{\log^2
    n}{2mn}$ for $(1-\epsilon) mn$ times. The expected value of $X$
  is:
  $$E[X] = \frac{\log^2 n}{2mn}(1-\epsilon) mn = \frac{1-\epsilon}{2}
  \log^2 n $$ Thus using the Chernoff bound:
  \begin{eqnarray*}
    \Pr\left(X \leq \frac{1-2\epsilon}{2} \log^2 n\right)& \leq & exp\left(-\frac12\left(1-\frac{\epsilon}
            {1-\epsilon}\right)\log^2 n\right) \\
            &\in& O(n^{-2})
  \end{eqnarray*}

  Thus with probability $O(1-n^{-2})$ $X$ is bigger that
  $\frac{1-2\epsilon}{2} \log^2 n$ but as mentioned before the
  increase in the degree of $v$ stochastically dominates $X$. Thus
  taking the union bound on all the possible $v$ we get that the
  statement holds with probability equal to $O(1-n^{-1})$. Thus the
  claim follows.
\end{proof}

\subsubsection{First-mover advantage}
\begin{lemma}\label{lem:first-mover}
  Let $G_{n}^{m}$ be the preferential attachment graph obtained after
  $n$ steps. Then with high probability all the nodes that join the
  network before time $n^{0.3}$ have degree at least $\log^3 n$.
\end{lemma}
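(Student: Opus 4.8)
The plan is to lower bound the final degree $d_n(v)$ of every node $v$ that arrives before time $n^{0.3}$, and then take a union bound over these (at most $n^{0.3}$) nodes. Fix such a node $v$, arriving at time $t_0 \le n^{0.3}$ with initial degree at least $m$. As in the proofs of Lemmas~\ref{early-birds} and~\ref{lem:rich-get-richer}, the idea is to dominate the growth of $d(v)$ from below by a sum of independent coin flips and apply a Chernoff bound. The key point is that, because the degree is non-decreasing, any lower bound we establish on $d_s(v)$ at some time $s$ stays valid for all later times; this lets us lower bound the per-step attachment probability cleanly and sidestep the dependencies inherent in the PA process.

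Concretely, whenever $d_s(v) \ge D$, each of the $m$ edges of the node arriving at step $s+1$ attaches to $v$ with probability at least $\frac{D}{2m(s+1)+1}$ (the numerator uses the current degree, the denominator bounds the total degree $M_i$). Conditioned on the entire history, every such sub-step independently succeeds with at least this probability, so the number of edges $v$ gains over a window $[T, T']$ stochastically dominates a sum of independent Bernoulli trials whose mean is, up to constants, $\frac{D}{2}\ln(T'/T)$.

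The difficulty --- and the reason this is not a one-line corollary of the previous technique --- is that a single application with the trivial bound $D = m$ over the whole window $[t_0,n]$ only yields an expected gain of $\Theta(m\log n)=\Theta(\log n)$, far short of the target $\log^3 n$; moreover, for small constant $D$ the Chernoff bound is too weak to beat the $n^{0.3}$ union bound. I therefore proceed in two stages. First, a single \emph{bootstrap} window of polynomial multiplicative length $[t_0,\, t_0\cdot n^{\beta}]$, with a small constant $\beta$ chosen so that $0.3 + \beta < 1$, raises the degree from $m$ to $\Theta(\log n)$; here the expected gain is already $\Theta(\log n)$, so the Chernoff lower tail gives failure probability $n^{-\Omega(1)}$ (with the hidden constant growing with $m$, which is fine since $m$ is assumed to be a sufficiently large constant). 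Second, once $d(v)=\Omega(\log n)$, I run $O(\log\log n)$ \emph{doubling} phases: in each phase I use the current lower bound $D$ on the degree, a window of constant multiplicative length suffices to make the expected gain exceed $D$ and hence double the degree, and because $D \ge \log n$ throughout, each phase fails with probability only $n^{-\Omega(1)}$. After $O(\log\log n)$ doublings the degree reaches $\log^3 n$, and the total multiplicative time is $n^{\beta}\cdot(\log n)^{O(1)} \ll n^{0.7}$, so the whole argument fits inside $[t_0,n]$.

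The main obstacle is exactly this bootstrapping: pushing the degree from a constant up to a polylogarithmic value while keeping a high-probability guarantee, since the concentration is weakest precisely when the degree is smallest. The resolution is to make the first window polynomially long so that its expected contribution is $\Omega(\log n)$, and thereafter to exploit the ``rich-get-richer'' acceleration (each doubling phase needs only a constant time multiplier), so that the remaining growth to $\log^3 n$ costs only poly-logarithmic time across $O(\log\log n)$ phases. A union bound over all at most $n^{0.3}$ early nodes and all phases --- a total failure probability of $n^{0.3}\cdot O(\log\log n)\cdot n^{-\Omega(1)} = o(1)$ --- then completes the proof.
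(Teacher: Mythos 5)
Your proof is correct, but it takes a genuinely different route from the paper's. The paper does not argue about degree growth directly: it passes to the equivalent $G^1_{nm}$ model and imports two facts from Cooper and Frieze's analysis --- concentration of the partial degree sum $D_k$ around $2\sqrt{kmn}$, and an explicit upper bound on the conditional point probabilities $\Pr(d_n(v_{k+1}) = d+1 \mid D_k - 2k = s)$ --- then sums these over $d < \log^3 n$ to bound the lower tail of a single early node's final degree by $O(\log^4 n/\sqrt{n})$, which survives a union bound over the $mn^{0.3}$ early vertices. Your argument is instead self-contained and in the same spirit as the paper's own Lemmas~\ref{early-birds} and~\ref{lem:rich-get-richer}: stochastic domination of the degree increments from below by independent coin flips, a polynomially long bootstrap window to lift the degree from $m$ to $\Theta(\log n)$, and $O(\log\log n)$ constant-multiplicative-length doubling phases to reach $\log^3 n$, with Chernoff lower tails at every stage. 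The trade-off is that the paper's computation works for every fixed $m \ge 1$ (the per-node failure probability $O(\log^4 n/\sqrt{n})$ does not degrade as $m$ shrinks), whereas your bootstrap failure probability is roughly $n^{-\Theta(m\beta)}$ and must beat the $n^{0.3}$ union bound, so you genuinely need $m$ to exceed an absolute constant; you flag this honestly, and it is consistent with the paper's standing assumption $ms^2 \ge 22$ in Lemma~\ref{lem:PA-iterations}, but it does make your version of Lemma~\ref{lem:first-mover} slightly weaker than the one the paper proves. In exchange you avoid any reliance on the external degree-sequence estimates, which arguably makes the argument more transparent and more robust to variants of the model.
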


\begin{proof}
  To prove this theorem we will use some results from~\cite{cf07}, but
  before we need to introduce another model equivalent to the
  preferential attachment. In this new process instead of constructing
  $G_{n}^{m}$, we first construct $G_{nm}^{1}$ and then we collapse
  the vertices $1,\cdots, m$ to construct the first vertex, the vertex
  between $m+1, \cdots 2m$ to construct the second vertex and so on so
  for. It is not hard to see that this new model is equivalent to the
  preferential attachment. Now we can prove our technical theorem.

  Now we can state two useful equation from the proof of Lemma 6
  in~\cite{cf07}. Consider the model $G_{1}^{nm}$. Let
  $D_k=d_{nm}(v_1)+d_{nm}(v_2)+\cdots +d_{nm}(v_k)$, where
  $d_{nm}(v_i)$ is the degree of a node inserted at time $i$ at time
  $nm$. Then $k\geq 1$ we have:
  \begin{eqnarray}\label{cooper_first}
    \Pr\left(|D_k -2\sqrt{kmn}|\geq 3\sqrt{mn \log(mn)}\right)\leq (mn)^{-2}
  \end{eqnarray}
  From the same paper we also have that if $0\leq d \leq mn -k - s$, we can derive from
  equation (23) that
  \begin{eqnarray}
    \Pr(d_n(v_{k+1}) = d+1 | D_k -2k = s) \leq \frac{s+d}{2N-2k-s-d}
  \end{eqnarray}

  From~\ref{cooper_first} we can derive that:
  $$\Pr\left(D_k-2k\geq 3\sqrt{mn\log(mn)} + 2\sqrt{kmn}
  -2k\right)\leq (mn)^{-2}$$
  $$\Pr\left(D_k-2k\geq 5\sqrt{kmn\log(mn)} \right)\leq (mn)^{-2}$$
  Thus we get that:
  \begin{eqnarray*}
    &&\Pr(d_n(v_{k+1}) < \log^3 n)  =   \sum_{0}^{\log^3 n - 1}\Pr(d_n(v_{k+1}) = i)\\
    &\leq&  \Pr\left(D_k-2k\geq 3\sqrt{mn\log(mn)} + 2\sqrt{kmn} -2k\right)\\
    &+& \sum_{i=0}^{\log^3 n - 1}\sum_{j=0}^{5\sqrt{kmn\log(mn)}}\Big(\Pr\left(D_k-2k = j\right)\\
    &&  \Pr(d_n(v_{k+1}) = i | D_k -2k = j)\Big) \\
    &\leq&  (mn)^{-2} +\sum_{i=0}^{\log^3 n - 1} \Pr\Big(d_n(v_{k+1}) = i | D_k -2k \\
    &&= 5\sqrt{mn\log(mn)}\Big) \\
    &\leq& (mn)^{-2} +\sum_{i=0}^{\log^3 n - 1}\frac{5\sqrt{mn\log(mn)}+i-1}{2mn-2k-5\sqrt{mn\log(mn)}-i+1}\\
    &\in& O\left(\frac{\log^4(n)}{\sqrt{n}}\right)
  \end{eqnarray*}
  where we assumed that $k\in O\left(n^{\frac{1}{3}}\right)$.
  
  So now by union bounding on the first $mn^{0.3}$ nodes we obtain that
  with high probability in $G_{1}^{nm}$ all the nodes have degree bigger than
  $\log^2 n$. But this implies in turn the statement of the theorem by construction of
  $G_{1}^{nm}$ .
\end{proof}

\subsubsection{Handling product of generalized harmonic}
\begin{lemma}
  Let $a$ and $b$ be constant greater than $0$. Then:
  $$\sum_{i=n^a}^{n^b-2}\sum_{j>i}^{n^b-1}\sum_{z>j}^{n^b}\frac{1}{i^2j^2z^2}\in
  O\left(\frac{1}{n^{3a}}\right)$$
\end{lemma}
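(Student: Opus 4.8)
The plan is to exploit the fact that the summand $\frac{1}{i^2 j^2 z^2}$ factorizes completely across the three indices, so that the ordering constraints $i<j<z$ can be discarded at the cost of only \emph{increasing} the sum. Since every one of $i,j,z$ ranges over a subset of $\{n^a, n^a+1, \dots\}$, I would first bound the triple sum by the cube of a single one-dimensional tail sum:
$$\sum_{i=n^a}^{n^b-2}\sum_{j>i}^{n^b-1}\sum_{z>j}^{n^b}\frac{1}{i^2j^2z^2} \;\le\; \left(\sum_{i=n^a}^{\infty}\frac{1}{i^2}\right)^{3},$$
where I have dropped both the ordering between $i,j,z$ and the upper truncation at $n^b$ (extending each sum to infinity), both of which only enlarge the quantity.

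The single remaining estimate is then the standard tail bound on the series $\sum 1/i^2$. By comparison with an integral I would write $\sum_{i=n^a}^{\infty}\frac{1}{i^2} \le \frac{1}{n^{2a}} + \int_{n^a}^{\infty}\frac{dx}{x^2} = \frac{1}{n^{2a}} + \frac{1}{n^a}$, which is $O(n^{-a})$ because $a>0$ guarantees $n^a \to \infty$ and so the second term dominates. Cubing this bound yields $O(n^{-3a})$, which is exactly the claimed order.

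There is no substantive obstacle here: the only thing to be careful about is resisting the temptation to track the ordering constraints or to evaluate the partial sums between $n^a$ and $n^b$ exactly, since decoupling the three summations loses nothing at the level of asymptotic order. Once one observes that the triple sum is dominated by the cube of the tail of $\sum 1/i^2$ beyond $n^a$, the result follows immediately from the elementary fact that this tail is $\Theta(n^{-a})$.
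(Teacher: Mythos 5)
Your proof is correct and follows essentially the same route as the paper: both arguments decouple the triple sum into a product of three one-dimensional sums, each bounded by the tail of $\sum 1/k^2$ beyond $n^a$, which is $\Theta(n^{-a})$. The only cosmetic difference is that you extend the tails to infinity while the paper keeps the truncation at $n^b$ and uses $\sum_{k=x}^{y}1/k^2 \in \Theta(1/x - 1/y)$; the asymptotic conclusion is identical.
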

\begin{proof}
  Recall that $\sum_{k=x}^{y}\frac{1}{k^2} \in
  \Theta(\frac{1}{x}-\frac{1}{y})$. Then
  \begin{eqnarray*}
    \sum_{i=n^a}^{n^b-2}\sum_{j>i}^{n^b-1}\sum_{z>j}^{n^b}\frac{1}{i^2j^2z^2} &=& \sum_{i=n^a}^{n^b-2}\frac{1}{i^2}\sum_{j>i}^{n^b-1}\frac{1}{j^2}\sum_{z>j}^{n^b}\frac{1}{z^2}\\
    &\in & \Theta(\frac{1}{n^a}-\frac{1}{n^b})\Theta(\frac{1}{n^a}-\frac{1}{n^b})\Theta(\frac{1}{n^a}-\frac{1}{n^b})\\
    &\in &  O\left(\frac{1}{n^{3a}}\right)
  \end{eqnarray*}
  \end{proof}

\noindent \textbf{Completing the Proof:}
We now use the technical lemmas above to complete our proof for the
preferential attachment model.

\begin{lemma}\label{lem:arrival-prob} 
  For a node $u$ with degree $d$, the probability that it is incident
  to a node arriving at time $i$ is at most $\max\{d, \log^3 n\} /
  (i-1)$ w.h.p.
\end{lemma}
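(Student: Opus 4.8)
The plan is to split on the arrival order of $u$ and the node arriving at time $i$, since in the process of Definition~\ref{def:PA} the edge between them (if it exists) is always inserted by whichever of the two arrives later. Write $t_u$ for the arrival time of $u$, and let $M_t$ denote the sum of degrees present when the node arriving at time $t$ inserts its edges. Since each of the first $t-1$ nodes contributes exactly $2m$ to the total degree (node $1$ through its $m$ self-loops, every later node through its $m$ edges), we have $M_t \ge 2m(t-1)$, and this remains true throughout the insertion of the $m$ edges of node $t$. Consequently a single edge of the arriving node attaches to a fixed already-present vertex of current degree $\delta$ with probability at most $\delta/(M_t+1) \le \delta/(2m(t-1))$, so by a union bound over its $m$ edges the arriving node becomes incident to that vertex with probability at most $m\delta/(2m(t-1)) = \delta/(2(t-1))$.

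First I would handle the case $i > t_u$, where node $i$ arrives after $u$ and hence creates the edge at time $i$. At that moment $u$ has degree at most its final degree $d$, because degrees are non-decreasing along the process. Applying the preliminary bound with $\delta \le d$ and $t = i$ gives an incidence probability of at most $d/(2(i-1)) \le d/(i-1) \le \max\{d,\log^3 n\}/(i-1)$. This case requires no concentration: the bound is deterministic once we know $d_n(u)=d$.

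The harder case is $i < t_u$, where $u$ attaches to the earlier node $i$ when $u$ itself arrives at time $t_u$. The preliminary bound now gives incidence probability at most $d_{t_u}(i)/(2(t_u-1))$, where $d_{t_u}(i)$ is the degree of node $i$ at time $t_u$. The key step is to control this degree: by standard degree-concentration results for the PA model (of the kind established in~\cite{cf07,br} and already exploited in Lemma~\ref{early-birds}), with high probability $d_{t_u}(i) = O(\sqrt{t_u/i}\,\log n)$ simultaneously for all $i \le t_u \le n$. Substituting this and using $t_u > i$ yields an incidence probability of at most $O(\log n/\sqrt{t_u i}) = O(\log n/i) \le \log^3 n/(i-1)$, comfortably within the claimed bound (the extra powers of $\log n$ leave ample slack, so a cruder degree bound would also suffice). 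Combining the two cases shows the incidence probability is at most $\max\{d,\log^3 n\}/(i-1)$, and a union bound over the $n$ nodes makes the controlling degree event hold for all of them at once.

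I expect the main obstacle to be making the second case rigorous, because there the per-edge attachment probability $d_{t_u}(i)/M_{t_u}$ is itself a random variable. The statement ``the probability is at most $X$ w.h.p.'' should therefore be read as: on the high-probability event that the degree bound $d_{t_u}(i)=O(\sqrt{t_u/i}\,\log n)$ holds, the conditional incidence probability is at most $X$; I would need this event to hold uniformly over all nodes $i$ (and over all candidate arrival times of $u$) from a single union bound rather than pairwise. The remaining nuisances are minor: the self-loop exception in Definition~\ref{def:PA} concerns attachment to the arriving node itself and hence never increases the probability of attaching to our fixed target, and the mild time-dependence of $M_t$ during a single node's insertion is absorbed by the crude estimate $M_t \ge 2m(t-1)$.
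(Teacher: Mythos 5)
Your proposal is correct and follows essentially the same route as the paper: split on whether node $i$ arrives before or after $u$, bound the per-edge attachment probability by the target's current degree over $2m(t-1)$ and union-bound over the $m$ edges, and in the harder case invoke the Cooper--Frieze degree bound $d_t(i) = \tilde{O}(\sqrt{t/i})$ to get $\log^3 n/(i-1)$. Your added care about reading the ``w.h.p.'' statement as conditioning on the uniform degree-bound event is a welcome clarification of a point the paper leaves implicit.
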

\begin{proof}
  If node $i$ arrives after $u$, the probability that $i$ is adjacent
  to $u$ is at most the given value, since there are $m(i-1)$ edges
  existing in the graph already, and we take the union bound over the
  $m$ edges incident to $i$. If $i$ arrives \emph{before} $u$, let $t$
  denote the time at which $u$ arrives. From Lemma~6 of \cite{cf07},
  the degree of $i$ at t is at most $\sqrt{t}{i} \log^3 n$ w.h.p.. But there
  are $(t-1)m$ edges already in the graph at this time, and since $u$
  has $m$ edges incident to it, the probability that one of them is
  incident to $i$ is at most $\log^3 n \frac{\sqrt{t}}{\sqrt{i}(t-1)}
  \le \log^3 n / (i-1)$.
\end{proof}

\begin{lemma}\label{lem:PA-few-neighbors}
   W.h.p, for any pair of nodes $u, v$ of degree $< \log^3 n$, $|N(u)
   \cap N(v)| \le 8$.
\end{lemma}
\begin{proof}
  From Lemma~\ref{lem:first-mover}, nodes $u$ and $v$ must have
  arrived after time $t = n^{0.3}$. Let $a, b$ be constants such that
  $0.3 < a < b < 1$ and $b \le 3/2 a - \eps$ for some constant $\eps >
  0$. We first show that the probability that any two nodes $u, v$
  with degree less than $\log^3 n$ and arriving before time $n^b$ have
  $3$ or more common neighbors between $n^a$ and $n^b$ is at most
  $n^{-\eps}$. This implies that, setting $a$ to $0.3$, nodes $u$ and
  $v$ have at most 2 neighbors between $n^a$ and $n^{3a/2 - \eps}$, at
  most 2 between $n^{3a/2 - \eps}$ and $n^{9a/4}$, and at most 2
  between $n^{9a/4}$ and $n^{27a/8} > n$, for a total of 6
  overall. Similarly, we show that $u$ and $v$ have at most 2
  neighbors arriving before $n^{0.3}$, which completes the lemma.
  
  From Lemma~\ref{lem:arrival-prob} above, the probability that a node
  arriving at time $i$ is incident to $u$ and $v$ is at most $(\log^3
  n / (i-1))^2$. (The events are not independent, but they are
  negatively correlated.) The probability that 3 nodes $i, j, k$ are
  all incident to both $u$ and $v$, then, is at most $(\log^3 n)^6 /
  ((i-1)(j-1)(k-1))^2$. Therefore, for a fixed $u, v$, the probability
  that \emph{some} 3 nodes are adjacent to $u$ and $v$ is at most:
  \begin{eqnarray*} 
    &&\log^{18} n \sum_{i=n^a}^{n^b} \sum_{j=n^a}^{n^b}
    \sum_{k=n^a}^{n^b} \frac{1}{\left((i-1)(j-1)(k-1)\right)^2} \\
    &\le& \log^{18} n \left(\frac{1}{n^a} - \frac{1}{n^b} \right)^3
  \end{eqnarray*}

  There are at most $n^b$ choices for each of $u$ and $v$; taking the
  union bound, the probability that any pair $u$, $v$ have 3 or more
  neighbors in common is at most $n^{2b - 3a} \log^{18} n = n^{-2 \eps}
  \log^{18}n$.
\end{proof}

So, by setting the matching threshold to $9$, the algorithm never
makes errors; we now prove that it actually detects a lot of good new
links.

\begin{lemma}\label{lem:PA-large-degree}
  The algorithm successfully identifies any node of degree $\ge 4
  \log^2n / (s^2 l)$.
\end{lemma}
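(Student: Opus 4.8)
The plan is to establish two facts about the first phase of \um and then combine them: the true pair $(u_i, v_i)$ accumulates many similarity witnesses, while every false pair $(u_i, v_j)$ with $j \neq i$ accumulates very few. Once both hold, $v_i$ is simultaneously the unique score-maximizer over $u_i$ and vice versa, and the score comfortably exceeds the matching threshold of $9$, so \um links the two copies. (The node is processed in the appropriate degree bucket of the algorithm, since its observed degrees in $G_1, G_2$ concentrate around $ds$.)

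For the lower bound on the true pair I would argue exactly as in the $G(n,p)$ warm-up. Fix a node $i$ of underlying degree $d \ge 4\log^2 n /(s^2 l)$. Each neighbor $z$ of $i$ in $G$ is a first-phase similarity witness for $(u_i, v_i)$ precisely when the edge $(i,z)$ survives into $G_1$, survives into $G_2$, and $z$ lies in the seed set $L$; these three events are independent and occur with probability $s \cdot s \cdot l = s^2 l$. Hence the number of witnesses is a sum of $d$ independent indicators with mean $d s^2 l \ge 4\log^2 n$, and a Chernoff bound gives that it is at least $2\log^2 n$ except with probability $\exp(-\Omega(\log^2 n))$, which is super-polynomially small. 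A union bound over all $n$ nodes then shows that w.h.p. every node of this degree has at least $2\log^2 n$ first-phase witnesses with its true copy.

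For the upper bound on false pairs, observe that a witness for $(u_i, v_j)$ is a node $z$ lying in $N_1(u_i) \cap N_2(v_j)$, so in particular $z \in N_G(i) \cap N_G(j)$; thus the false score is at most $|N_G(i) \cap N_G(j)|$. When both $i$ and $j$ have degree below $\log^3 n$, Lemma~\ref{lem:PA-few-neighbors} bounds this by $8$, and since $4\log^2 n /(s^2 l) < \log^3 n$ for large $n$ the node $i$ itself falls in this regime; hence against every competitor $j$ of degree $< \log^3 n$ the false score is at most $8 < 2\log^2 n$. Combined with the previous paragraph, $(u_i, v_i)$ wins the mutual-maximum comparison against all such partners and clears the threshold.

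The main obstacle is the remaining case, where the competing node $j$ (or $i$ itself) has degree at least $\log^3 n$, so that Lemma~\ref{lem:PA-few-neighbors} no longer applies. Here I would bound the common neighborhood directly through Lemma~\ref{lem:arrival-prob}: the probability that a node arriving at time $t$ is adjacent to both $i$ and $j$ is at most $\max\{d_i,\log^3 n\}\max\{d_j,\log^3 n\}/(t-1)^2$ (the two incidence events being negatively correlated), and I would sum this over the arrival times. The key point is that nodes of degree at least $\log^3 n$ join the network very early and are therefore few (Lemma~\ref{early-birds} and Lemma~\ref{lem:first-mover}), so a union bound over the comparatively small set of such competitors can be afforded while still keeping $|N_G(i) \cap N_G(j)|$ below $2\log^2 n$. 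This high-degree-versus-high-degree estimate, rather than the clean Chernoff bound or the direct appeal to Lemma~\ref{lem:PA-few-neighbors}, is the delicate part of the argument.
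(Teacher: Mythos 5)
Your lower bound for the true pair $(u_i,v_i)$ is exactly the paper's first step (a Chernoff bound on $d(v)s^2l \ge 4\log^2 n$ independent witness indicators), and your observation that the whole difficulty lies in bounding false pairs involving a node of degree $\ge \log^3 n$ is correct --- indeed, since the lemma covers nodes of arbitrarily high degree, Lemma~\ref{lem:PA-few-neighbors} is unavailable for essentially every comparison the lemma actually needs, so the ``delicate part'' you defer is not a corner case but the entire content of the proof. And as proposed, that part does not close: summing the bound of Lemma~\ref{lem:arrival-prob} over arrival times gives
$\sum_{t}\max\{d_i,\log^3 n\}\max\{d_j,\log^3 n\}/(t-1)^2 = O(d_i d_j)$,
since $\sum_t (t-1)^{-2}$ converges to a constant; for two high-degree nodes this is far larger than the $2\log^2 n$ budget you need, so this route requires the much sharper time-dependent adjacency probability $\tilde{O}(1/\sqrt{t\,t_i})$, which Lemma~\ref{lem:arrival-prob} does not supply. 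The fact that high-degree nodes are few does not rescue the union bound, because the failure is in the expectation itself, not in the number of events.

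The paper avoids bounding $|N(u)\cap N(v)|$ for high-degree pairs altogether. Instead it compares witness \emph{counts} directly: by Lemma~\ref{lem:rich-get-richer}, at most a $(2/3+\eps)$ fraction of $v$'s neighbors arrive before time $\eps n$, and each neighbor arriving after $\eps n$ is adjacent to $u$ with probability $O(d(u)/(2m\eps n)) = o(n^{-1/2+\delta})$. So even granting that \emph{every} early neighbor of $v$ is shared with $u$, the expected number of similarity witnesses for the false pair is at most $(2/3+\eps)\,d(v)s^2l$, while the true pair has at least $\tfrac{7}{8}\,d(v)s^2l$ w.h.p.; the constant multiplicative gap between $2/3+\eps$ and $7/8$ is then separated by Chernoff. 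You should replace your common-neighborhood estimate with this ``a constant fraction of $v$'s neighbors arrive too late to be shared'' argument, which is the idea your proposal is missing.
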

\begin{proof}
  For any node $v$ of degree $d(v) \ge 4 \log^2 n / (s^2 l)$, the
  expected number of similarity witnesses it has with its copy during
  the first phase is $d(v) s^2 l$; using the Chernoff Bound, the
  probability that the number is less than $7/8$ of its expectation is
  at most $\exp(-d(v) s^2 l / 128) \le \exp(-\log^2 n / 32) =
  \frac{1}{n^{\log n / 32}}$. Therefore, with very high probability,
  every node $v$ of degree $d(v)$ has at least $7/8 \cdot d(v) s^2 l$
  first-phase similarity witnesses with its copy.

  On the other hand, how many similarity witnesses can node $v$ have
  with a copy of a different node $u$? Fix $\eps > 0$, and first
  consider potential similarity witnesses that arrive before time $t =
  \eps n$; later, we consider those that arrive after this time. From
  Lemma~\ref{lem:rich-get-richer}, we have $d_t(v) \le (2/3 + \eps)
  d(v)$. Even if \emph{all} of these neighbors of $v$ are also
  incident to $u$, the expected number of similarity witnesses for
  $(u, v)$ is at most $d_t(v) s^2 l$. Now consider the neighbors
  of $v$ that arrive after time $\eps n$. Each of these nodes is a
  neighbor of $u$ with probability $\le d(u) / (2 m \eps n)$. But
  $d(u) \le \tilde{O}(\sqrt{n})$, and hence each of the neighbors of
  $v$ is a neighbor of $u$ with probability $o(1/n^{1/2 -
    \delta})$. Therefore, the expected number of similarity witnesses
  for $(u, v)$ among these nodes is at most $d(v) s^2 l / n^{1/2 -
    \delta}$. Therefore, the total expected number of similarity
  witnesses is at most $(2/3 + \eps) d(v) s^2 l$. Again using the
  Chernoff Bound, the probability that this is at least $7/8 \cdot
  d(v) s^2 l$ is at most $\exp(-3/4 d(v)s^2 l \cdot / 64) =
  \exp(-3\log^2 n / 64)$, which is at most $\frac{1}{n^{3\log n /
      64}}$. 

  To conclude, we showed that with very high probability, a
  high-degree node $v$ has at least $7/8 \cdot d(v) s^2 l$ first-phase
  similarity witnesses with its copy, and has fewer than this number
  of witnesses to the copy of any other node. Therefore, our algorithm
  correctly identifies all high-degree nodes.
\end{proof}

From the two preceding lemmas, we identify all the high-degree nodes,
and make no mistakes on the low-degree nodes. It therefore remains
only to show that we have a good success probability for the
low-degree nodes. In the lemma below, we show this when $ms^2 \ge
22$. We note that one still obtains good results even with a higher or
lower value of $ms^2$, but it appears difficult to obtain a simple
closed-form expression for the fraction of identified nodes.  For ease
of exposition, we present the case of $ms^2 \ge 22$ here, but the
proof generalizes in the obvious way.

\begin{lemma}\label{lem:PA-iterations}
  Suppose $ms^2 \ge 22$. Then, w.h.p., we successfully identify at
  least $97\%$ of the nodes.
\end{lemma}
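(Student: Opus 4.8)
The plan is to set up a bootstrapping/iterative argument that tracks a single "typical" low-degree node $v$ and lower-bounds the probability that it becomes identified, then argues that the expected fraction of identified nodes exceeds $97\%$, and finally converts this expectation statement into a high-probability statement. By Lemma~\ref{lem:PA-large-degree} every node of degree $\ge 4\log^2 n/(s^2 l)$ is already identified after the first phase, and by Lemma~\ref{lem:PA-few-neighbors} (with matching threshold $9$) the algorithm never makes a mistake on low-degree nodes, so I only need to show that enough low-degree nodes accumulate $9$ or more similarity witnesses to \emph{correctly matched} neighbors over the course of the $k$ iterations. First I would fix a node $v$ of true degree $d$; a given neighbor $w$ of $v$ survives in both copies and contributes a similarity witness precisely when the edge $vw$ is present in both $G_1$ and $G_2$ (probability $s^2$) \emph{and} $w$ itself has already been identified. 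So the core quantity is, for a random neighbor $w$ of $v$, the probability $q$ that $w$ is identified; once a node has enough identified neighbors surviving in both copies, it too becomes identified.

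The key step is to show that this identification probability has a fixed point bounded away from the "nothing gets identified" solution. I would let $p_u$ denote the probability that a node is ultimately identified and set up a recursive inequality: $v$ is identified if at least $9$ of its surviving-in-both neighbors are identified, where each neighbor survives in both copies independently with probability $s^2$ and is identified with probability at least the fixed-point value. Modeling the number of such witnesses as (stochastically dominated from below by) a binomial with mean about $ms^2 q$, and using the assumption $ms^2 \ge 22$, the mean number of witnesses comfortably exceeds the threshold $9$ when $q$ is close to $1$; a Chernoff/Poisson tail estimate then shows the probability of \emph{failing} to collect $9$ witnesses is small. Solving the resulting self-consistent inequality $q \ge f(q)$, where $f(q) = \Pr[\text{Binomial}(m, s^2 q) \ge 9]$, I expect to find that the stable fixed point satisfies $q \ge 0.97$ under the hypothesis $ms^2 \ge 22$ (this is exactly where that numerical constant is tuned). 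The seed for this recursion is supplied by the high-degree nodes of Lemma~\ref{lem:PA-large-degree}, which are identified unconditionally, so the iteration is not vacuous; running the algorithm for enough phases $k$ drives the identified fraction up to this fixed point.

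The hard part will be handling the dependencies inherent in the preferential-attachment model and in the iterative identification process. The events "$w$ is identified" for different neighbors $w$ of $v$ are not independent — they depend on the shared random graph structure and on the order in which the algorithm identifies nodes — so a naive product bound is not justified. I would address this by exploiting negative correlation (as already invoked in Lemma~\ref{lem:PA-few-neighbors}) to stochastically dominate the witness count from below by an independent binomial, and by arguing that the "already identified" events can be taken to refer to a slightly earlier phase so that the conditioning is on the past and the edge-survival randomness ($s^2$ factors) remains fresh and independent. A secondary subtlety is the direction of the recursion: to avoid circular dependence I would analyze the process as a branching-type exploration where a node's identification depends on witnesses that are themselves identified in strictly earlier phases, terminating at the high-degree seeds.

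Finally, I would promote the expectation bound to a with-high-probability statement exactly as in the random-graph case: since each node's identification status changes the final identified fraction by at most $1/n$, the method of bounded differences (McDiarmid's inequality, cited as \cite{concentration}) gives concentration of the identified fraction around its expectation, so that at least $97\%$ of nodes are identified w.h.p. The remaining bookkeeping — verifying that the exploration terminates within the $k$ phases the algorithm runs and that the o(1) error terms from the technical lemmas do not erode the $97\%$ bound — is routine once the fixed-point computation is in place.
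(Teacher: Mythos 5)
Your overall strategy --- bootstrap from the high-degree nodes of Lemma~\ref{lem:PA-large-degree}, count witnesses to already-identified neighbors, and play $ms^2 \ge 22$ against the threshold $9$ in a Chernoff bound to push the per-node failure probability below $3\%$ --- is the same engine the paper uses. But the two steps you defer as ``the hard part'' and ``routine bookkeeping'' are where the content lies, and as sketched they have gaps. First, your fixed-point equation $q \ge \Pr[\mathrm{Bin}(m, s^2 q)\ge 9]$ takes $q$ to be the probability that a random neighbor of $v$ is identified; in preferential attachment a node's neighbors are degree-biased, so the relevant quantity is the \emph{degree-weighted} fraction of unidentified nodes, which can be much larger than the node-fraction (an unidentified hub poisons many recursions, an unidentified leaf almost none). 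The paper resolves this by running the induction over arrival time in blocks of $n^{0.25}$ nodes and maintaining \emph{two} coupled invariants: at most $3\%$ of each block's nodes unidentified, and at most $8\%$ of the total degree incident to unidentified nodes; an edge is then ``good'' with probability at least $s^2\cdot 0.99\cdot 0.92$, giving mean $>20$ good edges against a threshold of $8$. Your single scalar recursion tracks only the first quantity, while the binomial parameter needs the second, and the $97\%$ conclusion is about the first. The arrival-time ordering also makes the recursion well-founded for free (each node's $m$ edges point to strictly earlier nodes, with at most a $0.01$ fraction landing in the same small block), which is exactly the acyclicity you gesture at with the ``branching-type exploration'' but never construct.

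Second, the concentration step via bounded differences with influence $1/n$ per node is not justified here: flipping the randomness at one node (its edge endpoints, its survival coins, its seed coin) can cascade through the iterative identification process and change the status of many downstream nodes, so the bounded-difference constant is not $O(1)$. The paper instead gets concentration block by block: conditioned on the inductive hypothesis for earlier blocks, each node of the current block fails with probability at most $\exp(-3.606)$ \emph{regardless of the outcomes for the other nodes of that block}, so the number of identified nodes stochastically dominates an independent binomial and ordinary Chernoff applies. With those two repairs your argument becomes essentially the paper's proof.
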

\begin{proof}
  We have already seen that all high-degree nodes (those arriving
  before time $n^{0.3}$) are identified in the first phase of the
  algorithm. Note also that it never makes a mistake; it
  therefore remains only to identify the lower-degree nodes.  We
  describe a sequence of iterations in which we bound the probability
  of failing to identify other nodes. 

  Consider a sequence of roughly $n^{0.75}$ iterations, in each of
  which we analyze $n^{0.25}$ nodes. In particular, iteration $i$
  contains all nodes that arrived between time $n^{0.3} +
  (i-1)n^{0.25}$ and time $n^{0.3} + i \cdot n^{0.25}$. We argue
  inductively that after iteration $i$, w.h.p. the fraction of nodes
  belonging to this iteration that are not identified is less than
  $0.03$, and the total fraction of degree incident to unidentified
  nodes is less than $0.08$. Since this is true for each $i$, we obtain
  the lemma.

  The base case of nodes arriving before $n^{0.3}$ has already been
  handled. Now, note that during any iteration, the total degree
  incident to nodes of this iteration is at most $2m n^{0.25} \ll
  n^{0.3}$. Thus, when each node of this iteration, the probability
  that any of its $m$ edges is incident to another node of this
  iteration is less than $0.01$.

  Consider any of the $m$ edges incident to a given node of this
  iteration. For each edge, we say it is \emph{good} if it survives in
  both copies of the graph, and is incident to an identified node from
  a previous iteration. Thus, the probability that an edge is good is
  at least $s^2 \cdot (0.99 \times 0.92)$.  Since $ms^2 > 22$, the
  expected number of good edges is greater than 20. The node will be
  identified if at least $8$ of its edges are good; applying the
  Chernoff bound, the probability that a given node is unidentified is
  at most $\exp(-3.606) < 0.02717$.

  Since this is true for each node of this iteration, regardless of
  the outcomes for previous nodes of the iteration, we can apply
  concentration inequalities even though the events are not
  independent. In particular, the number of identified nodes
  stochastically dominates the number of successes in $n^{0.25}$
  independent Bernoulli trials with probability $1 - \exp(-3.606)$
  (see, for example, Theorem 1.2.17 of \cite{MS}). Again
  applying the Chernoff Bound, the probability that the fraction of
  unidentified nodes exceeds $0.03$ is at most $\exp(0.27 n^{0.25} *
  0.01/4)$, which is negligible. To complete the induction, we need to
  show that the fraction of total degree incident to unidentified
  nodes is at most $0.08$. To observe this, note that the increase in
  degree is $2m n^{0.25}$; the unidentified fraction increases if the
  new nodes are unidentified (but we have seen the expected
  contribution here is at most $0.02717 m n^{0.25}$), or if the
  ``other'' endpoint is another node of this iteration (at most $0.01
  m n^{0.25}$), or if the ``other'' endpoint is an unidentified node
  (in expectation, at most $0.08 m n^{0.25}$). Again, a simple
  concentration argument completes the proof.
\end{proof}

\section{Experiments}\label{sec:experiments}

In this section we analyze the performance of our algorithm in
different experimental settings.  The main goal of this section is to
answer the following eight questions:
\begin{itemize}
\itemsep0pt
\item Are our theorems robust? Do our results depend on the constants
  that we use or are they more general?
\item How does the algorithm scale on very large graphs?
\item Does our algorithm work only for an underlying ``real'' network
  generated by a random process such as Preferential Attachment, or
  does it work for real social networks?
\item How does the algorithm perform when the two networks to be
  matched are not generated by independently deleting edges, but by a
  different process like a cascade model?
\item How does the algorithm perform when the two networks to be
  matched have different scopes? Is the algorithm robust to highly
  correlated edge deletion?
\item Does our model capture reality well? In more realistic
  scenarios, with distinct but similar graphs, does the algorithm
  perform acceptably?
\item How does our algorithm perform when the network is under attack?
  Can it still have high precision? Is it easy for an adversary to trick our
  algorithm?
\item How important is it to bucket nodes by degree? How big is the
  impact on the algorithm's precision? How does our algorithm compare
  with a simple algorithm that just counts the number of common
  neighbors? 
\end{itemize}

To answer these eight questions, we designed $4$ different experiments
using $6$ different publicly available data sets. These experiments
are increasingly challenging for our algorithm, which performs well in
all cases, showing its robustness.  Before entering into the details
of the experiments, we describe briefly the basic datasets used in the
paper. We use synthetic random graphs generated by the Preferential
Attachment~\cite{ba99}, Affiliation Network~\cite{ls09}, and
RMAT~\cite{CZF04} processes; we also consider an early snapshot of the
Facebook graph~\cite{vmcg09}, a snapshot of DBLP~\cite{dblp}, the
email network of Enron~\cite{KY04}, a snapshot of Gowalla~\cite{CML11}
(a social network with location information), and Wikipedia in two
languages~\cite{wiki}. In Table~\ref{tab:description} we report some
general statistics on the networks.

\begin{table}[t!]
\begin{center}
\begin{tabular}{|*{3}{c|}}
\hline
Network & Number of nodes   & Number of edges \\ \hline 
PA~\cite{ba99} & 1,000,000 & 20,000,000  \\ \hline
RMAT24~\cite{CZF04} & 8,871,645 & 520,757,402  \\ \hline
RMAT26~\cite{CZF04} & 32,803,311 & 2,103,850,648  \\ \hline
RMAT28~\cite{CZF04} & 121,228,778 & 8,472,338,793  \\ \hline
AN~\cite{ls09} & 60,026 & 8,069,546  \\ \hline
Facebook~\cite{vmcg09} & 63,731 & 1,545,686  \\ \hline
DBLP~\cite{dblp} & 4,388,906 & 2,778,941  \\ \hline
Enron~\cite{KY04} & 36,692 & 367,662  \\ \hline
Gowalla~\cite{CML11} & 196,591 & 950,327  \\ \hline
French Wikipedia~\cite{wiki} & 4,362,736 & 141,311,515  \\ \hline
German Wikipedia~\cite{wiki}  & 2,851,252 &  81,467,497 \\ \hline
\end{tabular}
\caption{The original 11 datasets.}
\label{tab:description}
\end{center}
\end{table}

\noindent \textbf{Robustness of our Theorems:} To answer the first question, we
use as an underlying graph the preferential attachment graph described
above, with 1,000,000 nodes and $m = 20$. We analyze the performance of
our algorithm when we delete edges with probability $s=0.5$ and with
different seed link probabilities.  The main goal of this experiment
is to show that the values of $m,s$ needed in our proof are only
required for the calculations; the algorithm is effective even with
much lower values. With the specified parameters, for the majority of
nodes, the expected number of neighbors in the intersection of both
graphs is $5$. Nevertheless, as shown in Figure~\ref{fig:pa}, our
algorithm performs remarkably well, making zero errors regardless of
the seed link probability. Further, it recovers almost the entire
graph. Unsurprisingly, lowering the threshold for our algorithm
increases recall, but it is interesting to note that in this setting,
it does not affect precision at all.

\begin{figure}
\begin{center}
\includegraphics[width=75mm]{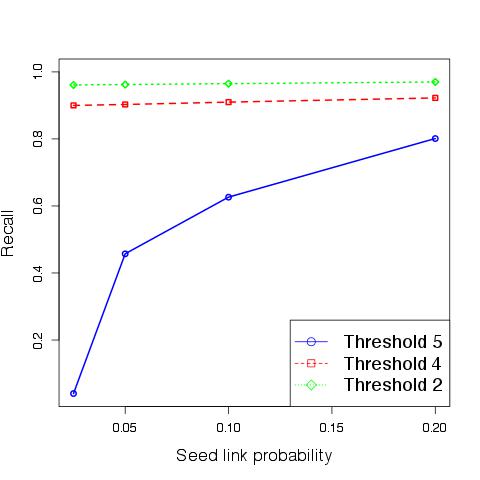}
\end{center}
\caption{The number of corrected pairs detected with different threshold for
the preferential attachment model with random deletion. The precision is not
shown in the plot because it is always 100\%.}
\label{fig:pa}
\end{figure}

\noindent \textbf{Efficiency of our algorithms:} Here we tested our
algorithms with datasets of increasing size. In particular we generate
3 synthetic random graphs of increasing size using the RMAT random
model. Then we use the three graphs as the underlying ``real''
networks and we generate $6$ graphs from them with edges surviving
with probability $0.5$. Finally we analyze the running time of our
algorithm with seed link probability equal to $0.10$. As shown in
Table~\ref{tab:running_time}, using the same amount of resources, the
running time of the algorithm increases by at most a factor $12.544$
between the smallest and the largest graph.

\begin{table}[t!]
\begin{center}
\small
\begin{tabular}{|*{3}{c|}}
\hline
Network & Number of nodes   & Relative running time \\ \hline 
RMAT24 & 8871645 & 1  \\ \hline
RMAT26 & 32803311 & 1.199  \\ \hline
RMAT28 & 121228778 & 12.544  \\ \hline
\end{tabular}
\caption{The relative running time of the algorithm on three RMAT graphs as a function of numbers
of nodes in the graph.}
\label{tab:running_time}
\end{center}
\end{table}
\noindent \textbf{Robustness to other models of the underlying graph:}
For our third question, we move away from synthetic graphs, and
consider the snapshots of Facebook and the Enron email networks as our
initial underlying networks. For Facebook, edges survive either with
probability $s=0.5$ or $s=0.75$, and we analyze performance of our
algorithm with different seed link probabilities. For Enron, which is
a much sparser network, we delete the edges with probability $s=0.5$
and analyze performance of our algorithm with seed link probability
equal to $0.10$.  The main goal of these experiments is to show that
our algorithm has good performance even outside the boundary of our
theoretical results even when the underlying network is not generated
by a random model.

In the first experiment with Facebook, when edges survive with
probability $0.75$, there are $63584$ nodes with degree at least $1$
in both networks.\footnote{Note that we can only detect nodes which
  have at least degree 1 in both networks} In the second, with edges
surviving with probability $0.5$, there are $62854$ nodes with this
property. In this case, the results are also very strong; see
Table~\ref{table:fb}. Roughly $28\%$ of nodes have extremely low
degree $(\le 5)$, and so our algorithm cannot obtain recall as high as
in the previous setting. However, we identify a very large fraction of
the roughly $45250$ nodes with degree above $5$, and the precision is
still remarkably good; in all cases, the error is well under
$1\%$. Table 2 presents the full results for the harder case, with
edge survival probability $0.5$. With edge survival probability $0.75$
(not shown in the table), performance is even better: At threshold $2$
and the lowest seed link probability of $5\%$, we correctly identify
$46626$ nodes and incorrectly identify $20$, an error rate of well
under $0.05\%$.
In the case of Enron, the original email network is very sparse, with
an average degree of approximately 20; this means that each copy has
average degree roughly 10, which is much sparser than real social
networks. Of the 36,692 original nodes, only 21,624 exist in the
intersection of the two copies; over 18,000 of these have degree $\le
5$, and the average degree is just over $4$. Still, with matching
threshold $5$, we identify almost all the nodes of degree $5$ and
above, and even in this very sparse graph, the error rate among newly
identified nodes is $4.8\%$.

\begin{table}[t!]
\begin{center}
\small
\begin{tabular}{|*{7}{c|}}
\hline
Pr & \multicolumn{2}{|c|}{Threshold 5}   & \multicolumn{2}{|c|}{Threshold 4}& \multicolumn{2}{|c|}{Threshold 2} \\ \hline 
& Good & Bad & Good & Bad & Good & Bad\\ \hline
$20\%$ & 23915 & 0 & 28527 & 53 & 41472 & 203 \\ \hline
$10\%$ & 23832 & 49 & 32105 & 112 & 38752 & 213 \\ \hline
$5\%$ & 11091 & 43 & 28602 & 118 & 36484 & 236 \\ \hline\hline
\end{tabular}
\begin{tabular}{|*{7}{c|}}
\hline
Pr & \multicolumn{2}{|c|}{Threshold 5}   & \multicolumn{2}{|c|}{Threshold 4}& \multicolumn{2}{|c|}{Threshold 3} \\ \hline 
& Good & Bad & Good & Bad & Good & Bad\\ \hline
$10\%$ & 3426 & 61 & 3549 & 90 & 3666 & 149 \\ \hline
\end{tabular}
\caption{Results for Facebook (Left) and Enron (Right) under the random deletion model. Pr
  denotes the seed link probability.}
\label{table:fb}
\end{center}
\end{table}

\noindent \textbf{Robustness to different deletion models:} We now turn our
attention to the fourth question: How much do our results depend on the
process by which the two copies are generated? To answer this, we
analyze a different model where we generate the two copies of the
underlying graph using the \emph{ Independent Cascade Model}
of~\cite{GLM01}. More specifically, we construct a graph starting from
one seed node in the underlying social network and we add to the graph
the neighbors of the node with probability $p=0.05$.  Subsequently,
every time we add a node, we consider all its neighbors and add each
of them independently with probability $p=0.05$ (note that we can try
to add a node to the graph multiple times).

The results in this cascade model are extremely good; in fact, for
both Facebook and Enron we have $0$ errors; as shown for Facebook in
Figure~\ref{fig:fbc}, we are able to identify almost all the nodes in
the intersection of the two graphs (even at seed link prob. $5\%$, we
identify $16,273/16533 = 98.4\%$). 

\begin{figure}
\begin{center}
\includegraphics[width=75mm]{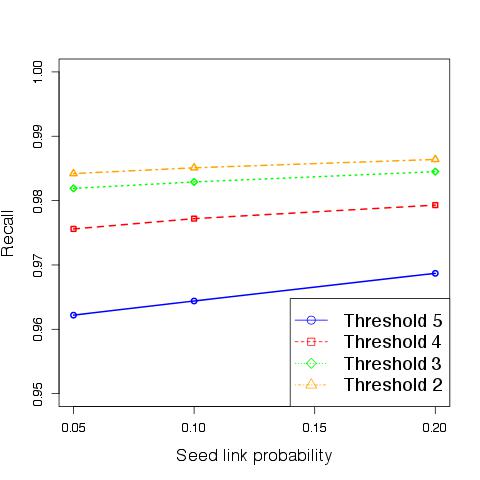}
\end{center}
\caption{The number of corrected pairs detected with different threshold for
the two Facebook graphs generated by the Independent Cascade
Model. The plot does not show precision, since it is always 100\%.}
\label{fig:fbc}
\end{figure}
 
\noindent \textbf{Robustness to correlated edge deletion:} We now
analyze one of the most challenging scenarios for our algorithm where,
independently in the two realizations of the social network, we delete
all or none of the edges in a community. For this purpose, we consider
the Affiliation Networks model~\cite{ls09} as the underlying real
network. In this model, a bipartite graph of users and interests is
constructed using a preferential attachment-like process and then two
users are connected in the network if and only if they share an
interest (for the model details, refer to~\cite{ls09}). To generate the
two copies in our experiment, we delete the interests independently in
each copy with probability $0.25$, and then we generate the graph
using only the surviving interests.  Note that in this setting, the
same node in the two realizations can have very different neighbors.
Still, our algorithm has very high precision and recall, as shown in
Table~\ref{tb:an}.

\begin{table}[t!]
\begin{center}
\small
\begin{tabular}{|*{7}{c|}}
\hline
Pr &  \multicolumn{2}{|c|}{Threshold 4}&
\multicolumn{2}{|c|}{Threshold 3} & \multicolumn{2}{|c|}{Threshold 2} \\ \hline 
& Good & Bad & Good & Bad  & Good & Bad\\ \hline
$10\%$ & 54770 & 0 & 55863 & 0 & 55942 & 0\\ \hline
\end{tabular}
\caption{Results for the Affiliation Networks model under correlated edge deletion probability.}
\label{tb:an}
\end{center}
\end{table}
 
\noindent \textbf{Real world scenarios:} Now we move to the most
challenging case, where the two graphs are no longer generated by a
mathematical process that makes 2 imperfect copies of the same
underlying network. For this purpose, we conduct two types of
experiments. First, we use the DBLP and the Gowalla datasets in which
each edge is annotated with a time, and construct 2 networks by taking
edges in disjoint time intervals.  Then we consider the French- and 
German-language Wikipedia link graph.

From the co-authorship graph of DBLP, the first network is generated by considering only the
publications written in even years, and the second is generated by
considering only the publications written in odd years. Gowalla is a
social network where each user could also check-in to a location (each
check-in has an associated timestamp).  Using this information we
generate two Gowalla graphs; in the first graph, we have an edge
between nodes if they are friends and if and only if they check-in to
approximately the same location in an odd month. In the second, we
have an edge between nodes if they are friends and if and only if they
check-in in approximately the same location in an even month.

Note that for both DBLP and Gowalla, the two constructed graphs have a
different set of nodes and edges, with correlations different from the
previous independent deletion models. Nevertheless we will see that
the intersection is big enough to retrieve a good part of the
networks.

In DBLP, there are $380,129$ nodes in the intersection of the two
graphs, but the considerable majority of them have extremely low
degree. Over $310$K have degree less than $5$ in the
intersection of the two graphs, and so again we cannot hope for
extremely high recall. However, we do find considerably more nodes
than in the input set. We start with a $10\%$ probability of seed
links, resulting in $32087$ seeds; however, note that most of these
have extremely low degree, and hence are not very useful. As shown in
table~\ref{table:dblp}, we have nearly $69,000$ nodes identified, with
an error rate of under $4.17\%$. Note that we identify over half the
nodes of degree at least $11$, and a considerably larger fraction of
those with higher degree. We include a plot showing precision and
recall for nodes of various degrees (Figure~\ref{fig:prdblpgowala}).

For Gowalla, there are 38103 nodes in the intersection of the two
graphs, of which over 32K have degree $\le 5$. We start with 3800
seeds, of which most are low-degree and hence not useful. We identify
over 4000 of the (nearly 6000) nodes of degree above 5, with an error
rate of $3.75\%$. See Table~\ref{table:dblp} and
Figure~\ref{fig:prdblpgowala} for more details.

Finally for a still more challenging scenario, we consider a case
where the 2 networks do not have any common source, but yet may have
some similarity in their structure. In particular, we consider the
case of the French- and German-language Wikipedia sites, which have
4.36M and 2.85M nodes respectively. Wikipedia also maintains a set of
inter-language links, which connect corresponding articles in a pair
of languages; for French and German, there are 531710 links,
corresponding to only $12.19\%$ of the French articles. The relatively
small number of links illustrates the extent of the difference between
the French and German networks. Starting with $10\%$ of the
inter-language links as seeds, we are able to nearly triple the number
of links (including finding a number of new links not in the input
inter-language set), with an error rate of $17.5\%$ in new
links. However, some of these mistakes are due to human errors in
Wikipedia's inter-language links, while others mistake French articles
to closely connected German ones; for instance, we link the French
article for Lee Harvey Oswald (the assassin of President Kennedy) to
the German article on the assassination.

\begin{table}[t!]
\begin{center}
\small
\begin{tabular}{|l|l|l|l|l|l|l|}
\hline
Pr & \multicolumn{2}{|c|}{Threshold 5}   & \multicolumn{2}{|c|}{Threshold 4}& \multicolumn{2}{|c|}{Threshold 2} \\ \hline 
& Good & Bad & Good & Bad & Good & Bad\\ \hline
10 & 42797 & 58 & 53026 & 641 & 68641 & 2985 \\ \hline\hline
\end{tabular}
\begin{tabular}{|l|l|l|l|l|l|l|}
\hline
Pr & \multicolumn{2}{|c|}{Threshold 5}   & \multicolumn{2}{|c|}{Threshold 4}& \multicolumn{2}{|c|}{Threshold 2} \\ \hline 
& Good & Bad & Good & Bad & Good & Bad\\ \hline
10 & 5520 & 29 & 5917 & 48 & 7931 & 155 \\ \hline\hline
\end{tabular}
\begin{tabular}{|l|l|l|l|l|l|l|}
\hline
Pr & \multicolumn{2}{|c|}{Threshold 5}   & \multicolumn{2}{|c|}{Threshold 3} \\ \hline 
& Good & Bad &  Good & Bad \\ \hline
10 & 108343 & 9441 & 122740 & 14373 \\ \hline
\end{tabular}
\caption{Results for DBLP (Top Left), Gowala (Top Right), and Wikipedia (Bottom)}
\label{table:dblp}
\end{center}
\end{table}

\begin{figure}[h]
\centering
\includegraphics[width=0.45\textwidth]{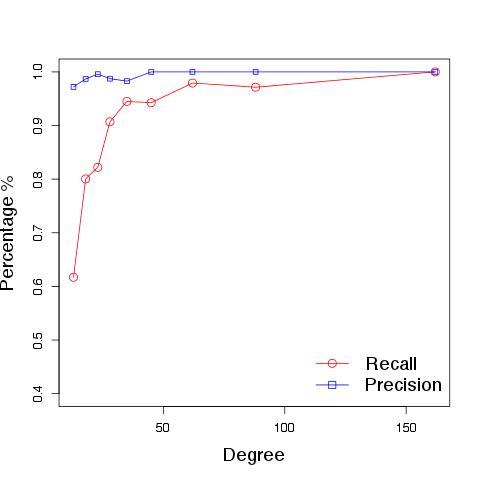}       \hspace{0.01\textwidth}            
\includegraphics[width=0.45\textwidth]{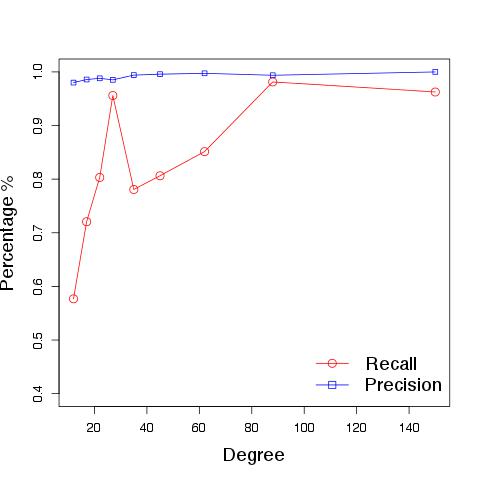}
\caption{\small Precision and Recall vs. Degree Distribution for
  Gowala (left) and DBLP (right).}
 \label{fig:prdblpgowala}
\end{figure}

\comment{
\subsection{Preferential Attachment}
Recall that our preferential attachment graphs were generated with 1M
nodes, $m = 20$, and $s = 0.5$. Note that for the majority of nodes,
the expected number of neighbors in the intersection of both graphs is
$5$. Nevertheless, as shown in table~\ref{fig:pa}, our algorithm
performs remarkably well, making zero errors regardless of the seed
link probability. Further, it recovers almost the entire
graph. Unsurprisingly, lowering the threshold for our algorithm
increases recall, but it is interesting to note that in this setting,
it does not affect precision at all.

\begin{table*}[t!]
\small
\begin{tabular}{|l|l|l|l|l|l|l|}
\hline
Pr(Seed Link) &Thr. 5 correct&Thr. 5 wrong&Thr.
4 correct&Thr. 4 wrong&Thr. 3 correct&Thr. 3 wrong
\\ \hline \hline						
20 & 801194 & 0 & 922368 & 0 & 969909 &	0 \\ \hline
10 & 626280 & 0	& 909851 & 0 & 964920 & 0  \\ \hline
5 & 457227 & 0& 902819 & 0 & 962285 & 0  \\ \hline
\end{tabular}
\caption{Results for Preferential Attachment}
\label{table:pa}
\end{table*}

\subsection{Facebook}
We work with an old snapshot of Facebook, and generate two pairs of
graphs from it. In the first, edges survive with probability $0.75$,
and in the second, edges survive with probability $0.5$. In the former
case, there are $63584$ nodes with degree at least $1$ in both
networks. In the latter, there are $62854$ nodes with this
property. In this case, the results are also very strong; see
table~\ref{table:fb}. Roughly $28\%$ of nodes have extremely low
degree $(\le 5)$, and so our algorithm cannot obtain recall as high as
in the previous setting. However, the precision is still remarkably
good; in all cases, the error is well under $1\%$. We present the full
results for the harder case, with edge survival probability
$0.5$. With edge survival probability $0.75$, at threshold $2$ and
seed link probability $5$, we correctly identify $46626$ nodes and
incorrectly identify $20$.

\begin{table}[t!]
\begin{center}
\small
\begin{tabular}{|*{7}{c|}}
\hline
Pr & \multicolumn{2}{|c|}{Threshold 5}   & \multicolumn{2}{|c|}{Threshold 4}& \multicolumn{2}{|c|}{Threshold 2} \\ \hline 
& Good & Bad & Good & Bad & Good & Bad\\ \hline
20 & 23915 & 0 & 28527 & 53 & 41472 & 203 \\ \hline
10 & 23832 & 49 & 32105 & 112 & 38752 & 213 \\ \hline
5 & 11091 & 43 & 28602 & 118 & 36484 & 236 \\ \hline
\end{tabular}
\caption{Results for Facebook Random model}
\label{table:fb}
\end{center}
\end{table}

\subsection{DBLP}

For DBLP, there are $380,129$ nodes in the intersection of the two
graphs, but the considerable majority of them have extremely low
degree. Over $310,000$ of them have degree less than $5$ in the
intersection of the two graphs, and so again we cannot hope for
extremely high recall. However, we do find considerably more nodes
than in the input set. We start with a $10\%$ probability of seed
links, resulting in $32087$ seeds; however, note that most of these
have extremely low degree, and hence are not very useful. As shown in
table~\ref{table:dblp}, we have nearly $69,000$ nodes identified, with
an error rate of under $4.17\%$. Note that we identify over half the
nodes of degree at least $11$, and a considerably larger fraction of
those with higher degree. We include a plot showing precision and
recall for nodes of various degrees (Figure~\ref{fig:dblp}).

\begin{table}[t!]
\begin{center}
\small
\begin{tabular}{|l|l|l|l|l|l|l|}
\hline
Pr & \multicolumn{2}{|c|}{Threshold 5}   & \multicolumn{2}{|c|}{Threshold 4}& \multicolumn{2}{|c|}{Threshold 2} \\ \hline 
& Good & Bad & Good & Bad & Good & Bad\\ \hline
10 & 42797 & 58 & 53026 & 641 & 68641 & 2985 \\ \hline
\end{tabular}
\caption{Results for DBLP}
\label{table:dblp}
\end{center}
\end{table}

\begin{figure}[h]
\centering
\subfloat[Gowala]{\label{fig:gowala}\includegraphics[width=0.20\textwidth]{data/plot/plot_gowala_pr.jpg}}       \hspace{0.03\textwidth}            
\subfloat[DBLP]{\label{fig:dblp}\includegraphics[width=0.20\textwidth]{data/plot/plot_dblp_pr.jpg}}
\caption{\small Precision and Recall vs. Degree Distribution for Gowala and DBLP.}
 \label{fig:time_gap_cdf}
\end{figure}
\begin{figure}
\includegraphics[angle=-90,width=\linewidth]{prec-recall-1.pdf}
\caption{}
\label{fig:1}
\end{figure}
}

\noindent \textbf{Robustness to attack:} We now turn our attention to
a very challenging question: what is the performance of our
algorithm when the network is under attack? In order to answer this
question, we again consider the Facebook network as the underlying
social network, and from it we generate two realizations with edge
probability $0.75$. Then, in order to simulate an attack, in each
network for each node $v$ we create a malicious copy of it, $w$, and
for each node $u$ connected to $v$ in the network (that is, $u\in
N(v)$), we add the edge $(u,w)$ independently with probability
$0.5$. Note that this is a very strong attack model (it assumes that
users will accept a friend request from a 'fake' friend with
probability 0.5), and is designed to circumvent our matching
algorithm. Nevertheless when we run our algorithm with seed link
probability equal to $0.1$, and with threshold equal to $2$ we notice
that we are still able to align a very large fraction of the two
networks with just a few errors ($46955$ correct matches and $114$ wrong
matches, out of $63731$ possible good matches).

\noindent \textbf{Importance of degree bucketing, comparison with
  straightforward algorithm:} We now consider our last question: How
important is it to bucket nodes by degree? How big is the impact on
the algorithm's precision? How does our algorithm compare with a
straightforward algorithm that just counts the number of common
neighbors? To answer this question, we run a few experiments. First,
we consider the Facebook graph with edge survival probability $0.5$
and seed link probability $5\%$, and we repeat the experiments again
without using the degree bucketing and with threshold equal 1.  In
this case we observe that the number of bad matching increases by a
factor of $50\%$ without any significant change in the number of good
matchings.

Then we consider other two interesting scenarios: How does this simple
algorithm perform on Facebook under attack? And how does it perform on
matching Wikipedia pages?
Those two experiments show two weaknesses of this simple
algorithm. More precisely, in the first case the simple algorithm
obtains $100\%$ precision but its recall is very low.  It is indeed
able to reconstruct less than half of the number of matches found by
our algorithm (22346 vs 46955). On the other hand, the second setting
shows that the precision of this simple algorithm can be very
low. Specifically, the error rate of the algorithm is $27.87\%$, while
our algorithm has error rate only $17.31\%$. In this second setting
(for Wikipedia) the recall is also very low, less than $13.52\%$;
there are 71854 correct matches, of which most (53174) are seed links,
and 7216 wrong matches.

\section{Conclusions}

In this paper, we present the first provably good algorithm for social
network reconciliation. We show that in well-studied models of social
networks, we can identify almost the entire network, with no
errors. Surprisingly, the perfect precision of our algorithm holds
even experimentally in synthetic networks. For the more realistic data
sets, we still identify a very large fraction of the nodes with very
low error rates. Interesting directions for future work include
extending our theoretical results to more network models and
validating the algorithm on different and more realistic data sets.\\

\noindent\textbf{Acknowledgements}

\noindent We would like to thank Jon Kleinberg for useful discussions and
Zolt\'an Gy\"ongyi for suggesting the problem.

\bibliographystyle{abbrv}
\bibliography{contact-matching}

\end{document}